\newtheorem{claim}{Claim}
\newtheorem{lemma}{Lemma}
\newtheorem{theorem}{Theorem}
\newtheorem{corollary}{Corollary}
\newtheorem{definition}{Definition}
\newtheorem{remark}{Remark}
\newtheorem{proposition}{Proposition}
\author[1]{Maciej Sk\'{o}rski\thanks{{\texttt maciej.skorski@gmail.com}. Research supported by the WELCOME/2010-4/2 grant.}}
\author[2]{Alexander~Golovnev\thanks{\texttt alexgolovnev@gmail.com}}
\author[3]{Krzysztof Pietrzak\thanks{{\texttt pietrzak@ist.ac.at}. Research supported by ERC starting grant (259668-PSPC).}}
\affil[1]{University of Warsaw}
\affil[2]{New York University}
\affil[3]{IST Austria}
\newtheorem{Claim}{Claim}
\renewcommand{\ge}{\geqslant}
\renewcommand{\le}{\leqslant}
\newcommand{\HAvg}[1]{\widetilde{\mathbf{H}}_{\infty}\left(#1\right)}
\newcommand{\nstrings }{\{0,1\}^{n}}
\newcommand{\mstrings }{\{0,1\}^{m}}
\newcommand{\cD}{\mathsf{ D } }
\newcommand{\E}{\mathbb{E}}
\newcommand{\eps}{\epsilon}
\newcommand{\Ex}{Ex}
\newcommand{\algA}{{\sf A}}
\newcommand{\algB}{{\sf B}}
\newcommand{\algC}{{\sf C}}
\newcommand{\algD}{{\sf D}}
\newcommand{\algP}{{\sf P}}
\newcommand{\minentr}{H_{\infty}}
\newcommand{\hillentr}[1]{H_{#1}^{\sf{HILL}}}
\newcommand{\metricentrdb}[1]{H_{#1}^{{\sf Metric},det,\{0,1\}}}
\newcommand{\metricentrdr}[1]{H_{#1}^{{\sf Metric},det,[0,1]}}
\newcommand{\metricentrrb}[1]{H_{#1}^{{\sf Metric},rand,\{0,1\}}}
\newcommand{\metricentrc}[1]{H_{#1}^{{\sf Metric},class,range}}
\newcommand{\avminentr}{\widetilde{H}_{\infty}}
\newcommand{\expt}[1]{\underset{#1}{\mathbb{E}}}
\newcommand{\Uentr}[1]{H_{#1}^{\sf{unp}}}
\newcommand{\UentrX}[1]{H_{#1}^{\sf{*unp}}}
\newcommand{\supp}[1]{\ensuremath{\operatorname{supp}[#1]}}
\newcommand{\epsclose}[1]{\sim_{#1}}
\newcommand{\bin}{\{0,1\}}
\newcommand{\ext}{{\sf Ext}}
\newcommand{\GL}{{\sf GL}}
\newcommand{\cond}{{\sf Cond}}
\newcommand{\F}{{\sf F}}
\newcommand{\Eq}{\mathsf{Eq}}
\title{Condensed Unpredictability}
\date{}
\begin{document}

\maketitle

\begin{abstract}
We consider the task of deriving a key with high HILL
entropy  (i.e., being computationally indistinguishable from 
a key with high min-entropy) from an unpredictable source.

Previous to this work, the only known way to transform unpredictability into 
a key that was $\eps$ indistinguishable from having min-entropy was via 
pseudorandomness, for example by Goldreich-Levin (GL) hardcore bits.
This approach has the inherent limitation that from a source with $k$ bits of unpredictability entropy one can derive a key of length (and thus HILL entropy) 
at most $k-2\log(1/\epsilon)$ bits. In many settings, e.g. when dealing with biometric data, such a $2\log(1/\epsilon)$ bit entropy loss in not an option.

Our main technical contribution is a theorem that states that in the high entropy regime, unpredictability implies HILL entropy. 
Concretely,  any variable $K$ with $|K|-d$ bits of unpredictability entropy has the same amount of so called 
metric entropy (against real-valued, deterministic distinguishers), which is known to imply the same amount of HILL entropy. 
The loss in circuit size in this argument is exponential in the entropy gap $d$, and thus this result only applies for small $d$ (i.e., where the
size of distinguishers considered is exponential in $d$).

To overcome the above restriction, we investigate if it's possible to first ``condense'' unpredictability entropy and make the entropy gap small. We show that any source with 
$k$ bits of unpredictability can be condensed into a source of length $k$ with $k-3$ bits of unpredictability entropy. 
Our condenser simply ``abuses" the GL construction and derives a $k$ bit key from a source with $k$ bits of unpredicatibily. The original GL theorem 
implies nothing when extracting 
that many bits, but we show that in this regime, GL still behaves like a ``condenser" for unpredictability.
This result comes with two caveats  (1) the loss in circuit size is exponential in $k$ and (2) we require that the source we start with has \emph{no} HILL entropy (equivalently, one can efficiently check if a guess is correct). We leave it as an intriguing open problem to 
overcome these restrictions or to prove they're inherent.

\end{abstract}
\section{Introduction}

Key-derivation considers the  following fundamental problem: 
Given a joint distribution $(X,Z)$ where $X|Z$ (which is short for ``$X$ conditioned on $Z$") is guaranteed to have some kind of entropy, 
derive a ``good" key $K=h(X,S)$ from $X$ 
by means of some efficient key-derivation function $h$, possibly using public randomness $S$. 

In practice, one often uses a cryptographic hash function like SHA3 as the key derivation function $h(.)$  \cite{C:Krawczyk10,C:DGHKR04}, and then simply assumes that $h(.)$ behaves like a random oracle~\cite{CCS:BelRog93}.

In this paper we continue the investigation of key-derivation with provable security guarantees, where we don't make any computational assumption about $h(.)$. 
This problem is fairly well understood for sources $X|Z$ that have high min-entropy (we'll formally define all the entropy notions used in \ref{S:entropy} below), or 
are computationally indistinguishable from having so (in this case, we say $X|Z$ has high HILL entropy). 
In the case where $X|Z$ has $k$ bits of min-entropy, we can either use a strong extractor to derive a $k-2\log\eps^{-1}$ key that is $\eps$-close to uniform, or a condenser to get a $k$ bit key which is $\eps$-close to a variable with $k-\log\log\eps^{-1}$ bits of min-entropy. Using 
extractors/condensers like this also works for HILL entropy, 
except that now we only get computational guarantees (pseudorandom/high HILL entropy) on the derived key.

Often one has to derive a key from a source $X|Z$ which has no HILL entropy at all. The weakest assumption we can make on $X|Z$ for any kind of key-derivation to be possible, is that $X$ is hard to predict given $Z$. 
This has been formalized in \cite{hslure07} by saying that $X|Z$ 
has $k$ bits of unpredictability entropy, denoted $\Uentr{s}(X|Z)\ge k$, if no 
circuit of size $s$ can predict $X$ given $Z$ with advantage $\ge 2^{-k}$ 
(to be more general, we allow an additional parameter $\delta\ge 0$, and $\Uentr{\delta,s}(X|Z)\ge k$ holds if $(X,Z)$ is 
$\delta$-close to some distribution $(Y,Z)$ with $\Uentr{s}(Y|Z)\ge k$). 
We will also consider a more restricted notion, where we say that 
$X|Z$ has $k$ bits of \emph{list}-unpredictability entropy, denoted 
$\UentrX{s}(X|Z)\ge k$, if it has $k$ bits of unpredictability entropy relative to an oracle $\Eq$ which can be used to verify the correct guess ($\Eq$ outputs $1$ on input $X$, and $0$ otherwise).\footnote{We chose this name as having access to $\Eq$ is equivalent to being allowed to output a list of guesses. This is very similar 
to the well known concept of  list-decoding.}
We'll discuss  this notion in more detail below. 
For now, let us just 
mention that for the important special case where it's easy to verify if a guess for $X$ is correct (say, because we condition on 
$Z=f(X)$ for some one-way function\footnote{To be precise, this only holds for \emph{injective} one-way functions. One can generalise list-unpredictability and let $\Eq$ output $1$ on some 
set $\cal X$, and the adversary wins if she outputs any $X\in\cal X$. 
Our results (in particular Theorem~\ref{T:GLC}) also hold for this more general notion, which captures general one-way functions by letting ${\cal X}=f^{-1}(f(X))$ be the set 
of all preimages of $Z=f(X)$.}  $f$), the oracle $\Eq$ does not help, and thus unpredictability and list-unpredictability coincide. 
The results proven in this paper imply 
 that from a source $X|Z$ with $k$ bits of list-unpredictability entropy, it's possible to extract a 
$k$ bit key with $k-3$ bits of HILL entropy
\begin{proposition}
\label{P:main}
Consider a joint distribution $(X,Z)$ over $\bin^n\times\bin^m$ where 
\begin{equation}
\label{e:P1}
\UentrX{s,\gamma}(X|Z)\ge k
\end{equation}
Let 
$S\in \bin^{n\times k}$ be uniformly random and $K=X^TS\in\bin^k$, then the unpredictability entropy of $K$ is
\begin{equation}
\label{e:P2}
\Uentr{s/2^{2k}poly(m,n),\gamma}(K|Z,S)\ge k-3
\end{equation}
and the HILL entropy of $K$ is
\begin{equation}
\label{e:P3}
\hillentr{t,\eps+\gamma}(K|Z,S)\ge k-3
\end{equation}
with\footnote{We denote with $poly(m,n)$ some fixed polynomial in $(n,m)$, but it can denote different polynomial throughout the paper. In particular, the $poly$ here is not the same as in (\ref{e:P2}) as it hides several extra terms.} 
$t
=s\cdot \frac{\eps^7}{2^{2k}poly(m,n)}$.
\end{proposition}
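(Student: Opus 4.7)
The plan is to establish \eqref{e:P2} by directly analyzing the Goldreich--Levin map $X \mapsto X^T S$ as a condenser for unpredictability, and then to deduce \eqref{e:P3} by feeding the resulting source into the paper's main technical theorem (that unpredictability entropy in the small-gap regime implies the same amount of HILL entropy).

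For \eqref{e:P2} I would argue by contrapositive. Suppose a circuit $P$ of size $s_P = s/(2^{2k}\mathrm{poly}(m,n))$ predicts $K = X^T S$ from $(Z,S)$ with probability exceeding $2^{-(k-3)} = 8 \cdot 2^{-k}$; the goal is to build a circuit $P'$ of size $s$ with oracle access to $\Eq$ that predicts $X$ from $Z$ with probability exceeding $2^{-k}$, contradicting \eqref{e:P1}. A reverse-Markov step identifies a ``heavy'' set of $X$ (of probability mass $\Omega(2^{-k})$) on which $\Pr_S[P(Z,S) = X^T S]$ is a constant factor above the random-guessing rate $2^{-k}$. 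On such inputs $P(Z,\cdot)$ is a noisy oracle for the linear map $S \mapsto X^T S$, so I would apply a list-decoding procedure for the ``$k$-bit Hadamard'' linear code $X \mapsto (X^T S)_S$ to obtain a list of at most $2^{O(k)}$ candidates containing $X$. The circuit $P'$ then scans this list using $\Eq$ and outputs the (unique) match. The $2^{O(k)}$ invocations of $P$ and $\Eq$ produce exactly the $2^{2k}\mathrm{poly}(m,n)$ loss factor; the statistical-distance parameter $\gamma$ passes through unchanged because appending the side information $S$ and post-processing by a linear function preserves statistical distance.

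For \eqref{e:P3} I would invoke the paper's main technical theorem. The source $(K, Z, S)$ has length $|K| = k$ and, by \eqref{e:P2}, unpredictability entropy at least $k - 3$, so the entropy gap is only $d = 3$. The main theorem converts unpredictability entropy into metric entropy against real-valued deterministic distinguishers with a circuit-size loss exponential in $d$---hence merely a constant factor here---after which the standard metric-to-HILL reduction yields HILL entropy of the same amount with an additional loss polynomial in $1/\eps$, $m$, $n$ (contributing the $\eps^7$ factor). Composing these losses with the $2^{2k}\mathrm{poly}(m,n)$ loss from the condenser step gives $t = s \cdot \eps^7/(2^{2k}\mathrm{poly}(m,n))$, as claimed.

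The main obstacle is the list-decoding step. The agreement rate $O(2^{-k})$ is only a constant factor above the random-guessing rate $2^{-k}$ and so lies well below the Johnson list-decoding bound for alphabet size $2^k$; list-decoding at this radius is delicate and must use the full linearity of the code together with the $\Eq$ oracle to prune candidates. This is precisely why the hypothesis is phrased as list-unpredictability $\UentrX$ rather than plain unpredictability, and why caveat (2) from the introduction---that the source has no HILL entropy, equivalently that one can efficiently verify correct guesses---appears to be unavoidable for this technique.
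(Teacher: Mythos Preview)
Your overall decomposition---condense via the Goldreich--Levin map (this is Theorem~\ref{T:GLC}), then apply the unpredictability-to-HILL result (Theorem~\ref{T:m2}) with entropy gap $d=3$---is exactly how the paper derives Proposition~\ref{P:main}, and your treatment of \eqref{e:P3} is correct (one quibble: the $\eps^5$ portion of the $\eps^7$ loss already appears in the unpredictability-to-metric step of Theorem~\ref{T:m2}, not only in the metric-to-HILL conversion).

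The gap is in your sketch of \eqref{e:P2}. You correctly isolate the obstacle---a predictor for $K=X^TS$ with success $c\cdot 2^{-k}$ gives agreement only a constant factor above random, far below the Johnson bound for alphabet $\{0,1\}^k$---but you do not say \emph{how} to list-decode at this radius; ``use the full linearity'' is not a method, and $\Eq$ is only used after a list has been produced, not to produce it. The paper's key idea, which your sketch is missing, is a chain-rule reduction to the \emph{binary} Hadamard code. Writing $\Pr_S[\algA(z,S)=X^TS]=\prod_{i=1}^k \epsilon_i$, where $\epsilon_i$ is the probability that the $i$th output bit is correct conditioned on the first $i{-}1$ being correct, some index $i$ must satisfy $\epsilon_i\ge 1/2+\Omega(1/k)$. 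Fixing a good prefix $r^{i-1}$ (found by Markov) and enumerating the values $a^{i-1}=(r_1\cdot x,\ldots,r_{i-1}\cdot x)$, one obtains a single-bit predictor for $r_i\cdot x$ that declares an \emph{erasure} whenever $\algA$'s first $i{-}1$ output bits disagree with $a^{i-1}$. Hast's error-and-erasure decoder for the binary Hadamard code (Theorem~\ref{thm:erasures}) then returns a short list containing $x$, which $\Eq$ verifies. The enumeration over $i\in[k]$, over $a^{i-1}\in\{0,1\}^{i-1}$, and over the decoder's list is precisely what produces the $2^{2k}\mathrm{poly}(m,n)$ overhead; without this reduction your ``list of at most $2^{O(k)}$ candidates'' claim is unsubstantiated.
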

Proposition~\ref{P:main} follows from two results we prove in this paper.  

First, in Section~\ref{S:CU} we prove  Theorem~\ref{T:GLC} which shows how to  ``abuse'' Goldreich-Levin 
hardcore bits by generating a $k$ bit key $K=X^TS$ from a source $X|Z$ with $k$ bits of 
list-unpredictability. The Goldreich-Levin theorem \cite{STOC:GolLev89} implies nothing about the pseudorandomness of $K|(Z,S)$ when extracting that many bits. Instead, we prove that GL is a good ``condenser" for unpredictability entropy: if $X|Z$ has $k$ bits of list-unpredictability entropy, then $K|(Z,S)$ has $k-3$ bits of unpredictability entropy (note that 
we start with list-unpredictability, but only end up with ``normal" unpredictability entropy). 
This result is used in the first step in Proposition~\ref{P:main}, showing that (\ref{e:P1}) implies (\ref{e:P2}).

Second, in Section~\ref{S:H2M} we prove our main result, Theorem~\ref{T:m2} which states 
that any source $X|Z$ which has $|X|-d$ bits of unpredictability entropy, has the same amount of HILL entropy (technically, we show that it implies the same amount of 
metric entropy against deterministic real-valued distinguishers. This notion implies the same amount of HILL entropy as shown by Barak et al. \cite{bashwi03}). The security loss in this argument is exponential in the entropy gap $d$. Thus, if $d$ is very large, this argument is useless, but if we first condense unpredictability as just explained, we have a gap of only $d=3$. 
This result is used in the second step in Proposition~\ref{P:main}, showing that  (\ref{e:P2}) implies (\ref{e:P3}).
In the two sections below we discuss two shortcomings of Theorem~\ref{T:GLC} which we hope can be overcome in future work.\footnote{
After announcing this result at a workshop, we learned that Colin Jia Zheng proved a weaker version of this result. Theorem~4.18 in this PhD thesis, which is available via 
\url{http://dash.harvard.edu/handle/1/11745716}
also states that $k$ bits of unpredictability imply 
$k$ bits of HILL entropy. Like in our case, the loss in circuit size in his proof is 
polynomial in $\eps^{-1}$, but it's also exponential in 
$n$ (the length of $X$), whereas our loss is only exponential in the 
entropy gap $\Delta=n-k$.}
\subsubsection{On the dependency on $2^k$ in Theorem~\ref{T:GLC}.}
As outlined above, our first result is Theorem~\ref{T:GLC}, which shows how to 
condense a source with $k$ bits of list-unpredictability into a $k$ bit key having $k-3$ bits of unpredictability entropy. 
The loss in circuit size is $2^{2k}poly(m,n)$, and it's not clear if the dependency on $2^k$ is necessary here, or if one 
can replace the dependency on $2^k$ with a dependency on $poly(\eps^{-1})$ at the price of an extra $\eps$ term in the distinguishing advantage. 
In many settings $\log(\eps^{-1})$ is in the order of $k$, in which case the above difference is not too important. This is for example the case when considering a $k$ bit 
key for a symmetric primitive like a block-cipher, where one typically assumes the hardness of the cipher to be exponential in the key-length (and thus, if we want 
$\eps$ to be in the same order, we have $\log(\eps^{-1})=\Theta(k)$). In other settings, 
$k$ can be superlinear in $\log(\eps^{-1})$, e.g., if the the high entropy string is used to generate an RSA key.

\subsubsection{List vs. normal Unpredictability.}
Our Theorem~\ref{T:GLC} shows how to condense a source where $X|Z$ has $k$ bits of \emph{list}-unpredictability entropy into a $k$ bit string with $k-3$ bits unpredictability entropy. 
It's an open question to which extent it's necessary to assume \emph{list}-unpredictability here, maybe ``normal" unpredictability is already sufficient? 
Note that list-unpredictability is a lower bound for unpredictability as one always can ignore the 
$\Eq$ oracle, i.e., $\Uentr{\eps,s}(X|Z)\ge \UentrX{\eps,s}(X|Z)$, and in general, list-unpredictability can be much smaller than unpredictability entropy.\footnote{E.g., let $X$ by uniform over $\bin^n$ and $Z$ arbitrary, but independent of $X$, then for $s=\exp(n)$ we have $\Uentr{s}(X|Z)=n$ but 
$\UentrX{s}(X|Z)=0$ as we can simply invoke $\Eq$ on all $\bin^n$ until $X$ is found.}

Interestingly, we can derive a $k$ bit key with almost $k$ bits of HILL entropy from a source $X|Z$ which $k$ bits unpredictability entropy $\Uentr{\eps,s}(X|Z)\ge k$ in 
two extreme cases, namely, if either
\vspace{-0.2cm}
\begin{enumerate}
\item
if $X|Z$ has  basically no HILL entropy (even against small circuits).
\item
or when $X|Z$ has (almost) $k$ bits of (high quality) HILL entropy. 
\end{enumerate}
\vspace{-0.2cm}
In case 1. we observe that if  $\hillentr{\eps,t}(X|Z)\approx 0$ for some $t\ll s$, or equivalently, given $Z$ we can efficiently distinguish $X$ from any $X'\neq X$, then the
$\Eq$ oracle used in the definition of list-unpredictability can be efficiently emulated, which means it's redundant, and thus 
$X|Z$ has the same amount of list-unpredictability and unpredictability entropy, $\Uentr{s,\eps}(X|Z)\approx \UentrX{s',\eps'}(X|Z)$ for $(\eps',s')\approx (\eps,s)$.
Thus, we can use Theorem 1 to derive a $k$ bit key with $k-O(1)$ bits of HILL entropy in this case.
In case 2., we can simply use any  condenser for min-entropy to get a key with HILL entropy $k-\log\log\eps^{-1}$ (cf. Figure~\ref{fig2}). 
As condensing almost all the unpredictability entropy into HILL entropy is possible in the two extreme cases where $X|Z$ has either no or a lot of HILL entropy, it seems conceivable that it's also possible in all the in-between cases (i.e., without making any additional assumptions about $X|Z$ at all).

\subsubsection{GL vs. Condensing.}
Let us stress as this point that, because of the two issues discussed above, our result does not always allow to generate more bits with high 
HILL entropy than just using the Goldreich-Levin theorem. Assuming $k$ bits of unpredictability 
we get $k-3$ of HILL, whereas GL will only give $k-2\log(1/\eps)$. But as currently our reduction has a quantitatively 
larger loss in circuit size than the GL theorem, in order to get HILL entropy of the same quality (i.e., secure against $(s,\delta)$ adversaries for some fixed $(s,\delta)$) we must consider the unpredictability entropy of the source $X|Z$ against 
more powerful adversaries than if we're about to use GL. And in general, the amount of unpredictability 
(or any other computational) entropy of $X|Z$ can decrease as we consider more powerful adversaries.

\section{Entropy Notions}
\newcommand{\dist}{{\cal D}}
\label{S:entropy}
In this section we formally define the different entropy notions considered in this paper. 
We denote with $\dist_s^{rand,\{0,1\}}$ the set of all \emph{probabilistic} circuits of size $s$ with \emph{boolean} output, and $\dist_s^{rand,[0,1]}$ denotes the set of all \emph{probabilistic} 
circuits with \emph{real-valued} output in the range $[0,1]$. The analogous \emph{deterministic}
circuits are denoted $\dist_s^{det,\{0,1\}}$ and $\dist_s^{det,[0,1]}$. 
We use $X\sim_{\eps,s}Y$ to denote computational indistinguishability of variables $X$ and $Y$, formally\footnote{Let us mention that the choice of the distinguisher 
class in (\ref{e:D}) irrelevant (up to a small additive difference in circuit size), we can replace  $\dist_s^{rand,\{0,1\}}$ with any of the three other distinguisher classes.}
\begin{equation}
\label{e:D}
X\sim_{\eps,s}Y\iff \forall \algC\in\dist_s^{rand,\{0,1\}}\ :\ |\Pr[\algC(X)=1]-\Pr[\algC(Y)=1]|\le \eps
\end{equation}
$X\sim_{\eps}Y$ denotes that $X$ and $Y$ have statistical distance $\eps$, i.e., $X\sim_{\eps,\infty}Y$, and 
with $X\sim Y$ we denote that they're identically distributed. With $U_n$ we denote the uniform distribution over
$\bin^n$. 

\begin{definition}\label{def:min_entropy}
  The {\bf min-entropy} of a random variable $X$ with support $\cal
  X$ is 
  $$\minentr(X)=-\log_2\max_{x\in\cal X}\Pr[X=x]$$
  For a pair $(X,Z)$ of random variables, the {\bf average min-entropy} of $X$ conditioned on $Z$ is
$$
    \avminentr(X|Z) \ =\ -\log_2\expt{z\leftarrow Z}\max_x\Pr[X=x|Z=z]
                    \ =\  -\log_2\expt{z\leftarrow Z}2^{-\minentr(X|Z=z)} 
$$
\end{definition}
HILL entropy is a computational variant of min-entropy, where $X$ (conditioned on $Z$) has 
$k$ bits of HILL entropy, if it cannot be distinguished from some $Y$ that (conditioned on $Z$) has $k$ bits of min-entropy, formally
\begin{definition}[\cite{HILL99},\cite{hslure07}]\label{def:hill}
  A random variable $X$ has {\bf HILL entropy} $k$, denoted by $\hillentr{\varepsilon,s}(X)\geq k$, if there exists a distribution $Y$ satisfying $\minentr(Y)\geq k$ and $X\epsclose{\varepsilon,s} Y$.

Let $(X,Z)$ be a joint distribution of random variables. Then $X$ has
{\bf conditional HILL entropy} $k$
conditioned on $Z$, denoted by $\hillentr{\varepsilon,s}(X|Z)\geq k$, if there exists 
a joint distribution $(Y,Z)$ such that $\avminentr(Y|Z)\geq k$ and $(X,Z)\epsclose{\varepsilon,s} (Y,Z)$.
\end{definition}
Barak, Sahaltiel and Wigderson \cite{bashwi03} define the notion of 
metric entropy, which is defined like HILL, but the quantifiers are exchanged. That is, instead of asking for 
a single distribution $(Y,Z)$ that fools all distinguishers, we only ask that for every 
distinguisher $\algD$, there exists such a distribution. For reasons discussed in 
Section~\ref{S:MH}, in the definition below we make the class of distinguishers considered explicit.
\begin{definition}[\cite{bashwi03},\cite{cryptoeprint:2012:466}]
\label{def:metric}
Let $(X,Z)$ be a joint distribution of random variables. 
Then $X$ has {\bf conditional metric entropy} $k$  conditioned on $Z$ (against probabilistic boolean distinguishers), denoted by 
 $\metricentrrb{\varepsilon,s}(X|Z)\geq k$, if for every $\algD\in\dist_s^{rand,\{0,1\}}$ 
 there exists a joint distribution $(Y,Z)$ such that $\avminentr(Y|Z)\geq k$ and 
 $$
 |\Pr[\algD(X,Z)=1]-\Pr[\algD(Y,Z)=1]|\le \eps
 $$
 More generally, for $class\in\{rand,det\},range\in\{[0,1],\{0,1\}\}$, \\
 $\metricentrc{\varepsilon,s}(X|Z)\geq k$ if for every 
 $\algD\in\dist_s^{class,range}$ such a $(Y,Z)$ exists.
\end{definition}
Like HILL entropy, also 
unpredictability entropy, which we'll define next, can be seen as a computational variant of 
min-entropy. Here  we don't require indistinguishability as for HILL entropy, but only that 
the variable is hard to predict.
\begin{definition}[\cite{hslure07}]\label{def:uentropy}
$X$ has {\bf unpredictability entropy} $k$
conditioned on $Z$, denoted by $\Uentr{\eps,s}(X|Z)\geq k$, if  
$(X,Z)$ is $(\eps,s)$ indistinguishable from some $(Y,Z)$, where no probabilistic circuit of size 
$s$ can predict $Y$ given $Z$ with probability better than $2^{-k}$, i.e.,
\begin{equation}
\label{e:C}
\Uentr{s,\eps}(X|Z)\geq k
\iff
 \exists (Y,Z),(X,Z)\epsclose{\varepsilon,s} (Y,Z)\ \forall \algC,|\algC|\le s\ :\  \!\!\!\!\!\!\!\!\!\!\!
\Pr_{(y,z)\gets (Y,Z)}\!\!\!\!\!\!\!\!\![\algC(z)=y]\le 2^{-k}
\end{equation}
We also define a notion called ``list-unpredictability'', denoted 
$\UentrX{\eps,s}(X|Z)\geq k$, which holds if 
$\Uentr{\eps,s}(X|Z)\geq k$ as in (\ref{e:C}), but where $\algC$ additionally gets oracle access to a function $\Eq(.)$ which outputs $1$ on input $y$ and $0$ otherwise. So, $\algC$ 
can efficiently test if some candidate guess for $y$ is correct.\footnote{We name this notion 
"list-unpredictability" as we get the same notion when instead of giving $\algC$ oracle access to $\Eq(.)$, we allow $\algC(z)$ to output a list of guesses for $y$, not just one value, and require that $\Pr_{(y,z)\gets (Y,Z)}[y\in \algC(z)]\le 2^{-k}$. This notion 
is inspired by the well known notion of list-decoding.}

\end{definition}
\begin{remark}[The $\eps$ parameter]
\label{R:1}
The $\eps$ parameter in the definition above is not really necessary, following \cite{EC:HsiLuRey07},
we added it so we can have a ``smooth" notion, 
which is easier to compare to HILL or smooth min-entropy. 
If $\eps=0$, we'll simply omit it, then the definition simplifies to 
$$\Uentr{s}(X|Z)\geq k \iff \Pr_{(x,z)\gets (X,Z)}[\algC(z)=x]\le 2^{-k}$$
Let us also mention that unpredictability entropy is only interesting if the conditional part $Z$ is not empty as 
(already for $s$ that is linear in the length of $X$) we have $\Uentr{s}(X)=\minentr(X)$ which can be seen by considering the circuit 
$\algC$ (that gets no input as $Z$ is empty) which simply outputs the constant $x$ maximizing $\Pr[X=x]$.
\end{remark}
\subsubsection{Metric vs. HILL.}
\label{S:MH}
We will use a lemma which states that deterministic real-valued metric entropy implies 
the same amount of HILL entropy (albeit, with some loss in quality). 
This lemma has been proven by \cite{bashwi03} for the unconditional case, i.e., when 
$Z$ in the lemma below is empty, it has been observed by \cite{cryptoeprint:2012:466,C:CKLR11} that the proof also holds in the conditional case as stated below
\begin{lemma}[\cite{bashwi03,cryptoeprint:2012:466,C:CKLR11}]
\label{L:M2H}
For any joint distribution $(X,Z)\in\bin^n\times\bin^m$ and any $\eps,\delta,k,s$
$$
\metricentrdr{\eps,s}(X|Z)\ge k\quad \Rightarrow \quad
\hillentr{\eps+\delta, s\cdot \delta^2/(m+n)}(X|Z)\ge k
$$
\end{lemma}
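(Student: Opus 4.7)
The plan is to prove the contrapositive via a minimax argument followed by a sampling/approximation step. Suppose the HILL bound fails, i.e., for every joint distribution $(Y,Z)$ with $\avminentr(Y|Z) \geq k$ there exists a probabilistic boolean distinguisher of size $t := s\delta^2/(m+n)$ distinguishing $(X,Z)$ from $(Y,Z)$ with advantage exceeding $\eps + \delta$. By a standard averaging over the internal randomness (and by replacing $D$ with $1-D$ to flip the sign when needed), we may assume the distinguisher is a deterministic boolean circuit of size $t$ and that the signed advantage $\E[D(X,Z)] - \E[D(Y,Z)]$ exceeds $\eps + \delta$. The goal is to exhibit a single deterministic $[0,1]$-valued circuit of size $s$ that separates $(X,Z)$ from every such $(Y,Z)$ by more than $\eps$, contradicting $\metricentrdr{\eps,s}(X|Z) \geq k$.

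I apply von Neumann's minimax theorem to the zero-sum game whose strategies are, on one side, deterministic boolean circuits $D$ of size $t$ and, on the other side, joint distributions $(Y,Z)$ whose $Z$-marginal agrees with that of $(X,Z)$ and which satisfy $\avminentr(Y|Z) \geq k$, with payoff $f(Y,D) = \E[D(X,Z)] - \E[D(Y,Z)]$. The distinguisher set is finite, hence compact. The distribution set is convex and compact: for a mixture $Y_\alpha = \alpha Y_1 + (1-\alpha) Y_2$ of joints with the same $Z$-marginal, the conditional maximum satisfies $2^{-H_\infty(Y_\alpha|Z=z)} \leq \alpha 2^{-H_\infty(Y_1|Z=z)} + (1-\alpha) 2^{-H_\infty(Y_2|Z=z)}$ by a pointwise max inequality, so taking expectations over $z$ preserves $\avminentr \geq k$. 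The failure of HILL rewrites as $\min_Y \max_D f(Y,D) > \eps + \delta$, so minimax yields a probability distribution $\mathcal{D}$ over $\dist_t^{det,\{0,1\}}$ with $\E_{D\sim\mathcal{D}}[f(Y,D)] > \eps + \delta$ for every admissible $(Y,Z)$. The function $g(x,z) := \E_{D\sim\mathcal{D}}[D(x,z)] \in [0,1]$ therefore $(\eps+\delta)$-separates $(X,Z)$ from every such $(Y,Z)$, but is not yet a small circuit.

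The final step derandomises $g$ by independent sampling. Draw $N := \lceil 2(m+n+1)\ln 2 / \delta^2 \rceil$ circuits $D_1,\dots,D_N$ i.i.d.\ from $\mathcal{D}$ and set $\tilde g(x,z) := \frac{1}{N}\sum_{i=1}^N D_i(x,z)$. Hoeffding's inequality gives $\Pr[|\tilde g(x,z) - g(x,z)| > \delta/2] \leq 2\exp(-N\delta^2/2)$ pointwise, and a union bound over all $2^{m+n}$ inputs in $\bin^n\times\bin^m$ still leaves positive probability that the pointwise error is uniformly at most $\delta/2$. Fix such a realisation; then $\tilde g$ is a deterministic real-valued circuit of size at most $N\cdot t + O(N) \leq s$ (since $N\cdot t = \Theta(m+n)/\delta^2 \cdot s\delta^2/(m+n) = \Theta(s)$) which, by the pointwise approximation, satisfies $\E[\tilde g(X,Z)] - \E[\tilde g(Y,Z)] > \eps + \delta - \delta = \eps$ for every admissible $(Y,Z)$, contradicting the metric-entropy assumption.

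The main obstacle is the minimax step: one must check the hypotheses of a suitable minimax theorem for this specific game, in particular the convexity of the set of joint distributions $(Y,Z)$ with the prescribed $Z$-marginal and $\avminentr(Y|Z) \geq k$, as sketched above. Once the mixed strategy $\mathcal{D}$ is in hand, the Hoeffding/union-bound approximation is routine and is exactly what pins down the quantitative loss $s\cdot \delta^2/(m+n)$ in the statement.
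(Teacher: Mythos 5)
Your proof is correct and follows essentially the same route as the cited works (Barak--Shaltiel--Wigderson and the conditional extensions in Chung--Kalai--Liu--Raz and Fuller--Reyzin), which the paper itself references rather than reproving: argue the contrapositive, reduce to deterministic boolean distinguishers, apply von Neumann minimax using convexity of $\{(Y,Z):\avminentr(Y|Z)\ge k\}$ with fixed $Z$-marginal, and then derandomise the mixed distinguisher by Hoeffding plus a union bound over all $2^{n+m}$ inputs to obtain a deterministic $[0,1]$-valued circuit of size roughly $s$. The only quibble is a constant factor of about $2\ln 2$ in the circuit-size accounting ($N t\approx 2\ln 2\cdot s$ rather than $\le s$); this is a standard cosmetic imprecision relative to the exact constant in the lemma statement and does not affect the argument.
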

Note that in Definition~\ref{def:hill} of HILL entropy, we only consider security against 
probabilistic boolean distinguishers (as $\sim_{\eps,s}$ was defined this way), whereas in Definiton~\ref{def:metric} of metric entropy  
we make the class of distinguishers explicit. The reason for this is that in the definition of 
HILL entropy the class of distinguishers considered is irrelevant 
(except for a small additive degradation in circuit size, cf. \cite[Lemma 2.1]{cryptoeprint:2012:466}).\footnote{This 
easily follows from the fact  that in the definition (\ref{e:D}) of  computational indistinguishability  the choice of the distinguisher class is irrelevant.}
Unlike for HILL, for  metric entropy the choice of the distinguisher class does matter. 
In particular, deterministic boolean metric entropy $\metricentrdb{\eps,s}(X|Y)\ge k$ is only known to imply deterministic real-valued metric entropy $\metricentrdr{\eps+\delta,s}(X|Y)\ge k-\log(\delta^{-1})$, i.e., we must allow for a $\delta>0$ loss in distinguishing advantage, and this will at the same time result in a loss of $\log(\delta^{-1})$ in the amount of entropy. 
For this reason, it is crucial that in Theorem~\ref{T:m2} we show 
that unpredictability entropy implies deterministic \emph{real-valued} metric entropy, 
so we can then apply Lemma~\ref{L:M2H} to get the same amount of HILL entropy. 
Dealing with real-valued distinguishers is the main source of technical difficulty in 
the proof of the Theorem~\ref{T:m2}, proving the analogous statement for deterministic \emph{boolean} distinguishers is much simpler.
\section{Known Results on Provably Secure Key-Derivation}

We say that a cryptographic scheme has security $\alpha$, if no adversary (from some 
class of adversaries like all polynomial size circuits) can win some security game with advantage $\ge \alpha$  
if the scheme is
instantiated with a uniformly random string.\footnote{We'll call this string ``key". Though in many settings (in particular when keys are not simply uniform random strings, like in public-key crypto) this string is not used as a key directly, but one rather should think of it as the randomness used to sample the actual keys.} 
Below we will distinguish between \emph{unpredictability} applications,  
where the advantage bounds the probability of winning some security game (a typical example are digital signature schemes, where the game captures the existential unforgeability under chosen message attacks), and \emph{indistinguishability} applications, where the advantage bounds the distinguishing advantage from some ideal object 
(a typical example is the security definition of pseudorandom generators or functions).
\subsection{Key-Derivation from Min-Entropy}
\paragraph{Strong Extractors.}
Let $(X,Z)$ be a source where 
$\avminentr(X|Z)\ge k$, or equivalently, no adversary can guess $X$ given $Z$ with probability better than $2^{-k}$
(cf. Def.~\ref{def:min_entropy}). 
Consider the case where we want to derive a key 
$K=h(X,S)$ that is statistically close to uniform given $(Z,S)$. For example, $X$ could be some physical source (like statistics from keystrokes) from which we want to generate almost uniform randomness. Here  
$Z$ models potential side-information the adversary might have on $X$.
This setting is very well understood, and such a key can be derived using a strong extractor as defined below.
\begin{definition}[\cite{STOC:NisZuc93},\cite{dors10}]
A function $\ext:\bin^n\times\bin^d\rightarrow \bin^\ell$
is an average-case $(k,\eps)$-strong extractor if for
every distribution $(X,Z)$ over $\bin^n\times \bin^m$
with $\avminentr(X|Z)\ge k$ and $S\sim U_d$, the distribution 
$(\ext(X,S),S,Z)$ has statistical distance $\eps$ to $(U_\ell,S,Z)$. 
\end{definition}
Extractors $\ext$ as above exist with $\ell=k-2\log(1/\eps)$  \cite{HILL99}. 
Thus, from any $(X,Z)$ where $\avminentr(X|Z)\ge k$ we can extract a key 
 $K=\ext(X,S)$ of length 
$k-2\log(1/\epsilon)$ that is $\epsilon$ close to uniform \cite{HILL99}. 
The entropy gap $2\log(1/\eps)$ is optimal by the so called ``RT-bound" \cite{RT00}, even if 
we assume the source is efficiently samplable \cite{EC:DPW14}. 

If instead of using a uniform $\ell$ bit key for an  $\alpha$ secure scheme, we use a key
that is  $\eps$ close to uniform, the scheme will still be at least $\beta=\alpha+\eps$ secure. 
In order to get security $\beta$ that is of the same order as $\alpha$, we thus must set $\eps\approx \alpha$. 
When  the available amount $k$ of min-entropy is small, for example when dealing with biometric data \cite{dors10,EC:BDKOS05}, a loss of $2\log(1/\epsilon)$  bits (that's $160$ bits for a typical security level $\eps=2^{-80}$) is often unacceptable. 
\paragraph{Condensers.}
The above bound is basically tight for many
\emph{indistinguishability} 
applications like pseudorandom generators or pseudorandom functions.\footnote{For example, consider a  pseudorandom function $\F:\bin^k\times\bin^a\rightarrow\bin$
and a key $K$ that is uniform over all 
keys where $\F(K,0)=0$, this distribution is $\eps\approx 1/2$ close to uniform and has min-entropy $\approx |K|-1$, but the security breaks completely as one can distinguish $\F(U_k,.)$ from $\F(K,.)$ with advantage $\beta\approx 1/2$ (by quering on input $0$, and outputting $1$ iff the output is $0$).}
Fortunately, for many applications a close to uniform key is not necessary, and a key $|K|$ with min-entropy 
$|K|-\Delta$ for some small $\Delta$ is basically as good as a uniform one. This is the case for all \emph{unpredictability} applications, which includes OWFs, digital-signatures and MACs.\footnote{\cite{TCC:DodYu13} identify an interesting class of applications called ``square-friendly", this class contains all unpredictability applications, and some indistinguishability applications like weak PRFs (which are PRFs that can only be queried on random inputs). This class of applications remains somewhat secure even for a small entropy gap $\Delta$: For $\Delta=1$ the security is  $\beta\approx\sqrt{\alpha}$. This is worse that the $\beta=2\alpha$ for unpredictability applications, but much better than the complete loss of security 
$\beta\approx 1/2$ required for some indistinguishability apps like (standard) PRFs.}
It's not hard to show that if the scheme is $\alpha$ secure with a uniform key 
it remains at least $\beta=\alpha  2^\Delta$ secure (against the same class of attackers) if instantiated with any key $K$ that has $|K|-\Delta$ bits of min-entropy.\footnote{Assume some adversary breaks the scheme, say, forges a signature, with advantage $\beta$ if the key comes from the distribution $K$. If we sample a uniform 
key instead, it will have the same distribution as $K$ conditioned on an event that holds with probability $2^{-\Delta}$, and thus this adversary will still break the scheme with 
probability $\beta/2^{\Delta}$.}
Thus, for unpredictability applications we don't have to extract an almost uniform key, but ``condensing" $X$ into a key with $|K|-\Delta$ bits of min-entropy for some small $\Delta$ is enough.

\cite{EC:DPW14} show that a $(\log\eps+1)$-wise independent hash function 
$\cond:\bin^n\times\bin^d\rightarrow\bin^\ell$ is a condenser with the following parameters. 
For any $(X,Z)$ where $\avminentr(X|Z)\ge \ell$, for a random seed $S$ (used 
to sample a $(\log\eps+1)$-wise independent hash function), the distribution 
$(\cond(X,S),S)$ is $\eps$ close to a distribution $(Y,S)$ where $\avminentr(Y|Z)\ge \ell-\log\log(1/\eps)$. 
Using such an $\ell$ bit key (condensed from a source with $\ell$ bits min-entropy) for an unpredictability application that is $\alpha$ secure (when using a uniform $\ell$ bit key), we get security
$\beta\le \alpha2^{\log\log(1/\eps)}+\eps$, which setting $\eps=\alpha$ gives 
$\beta \le \alpha(1+\log(1/\alpha))$ security,   thus, security degrades only by a logarithmic factor.

%

\subsection{Key-Derivation from Computational Entropy}
The bounds discussed in this section are summarised in Figures \ref{fig1} and \ref{fig2} in Appendix~\ref{S:fig}. The last row of 
Figure~\ref{fig2} is the new result proven in this paper.
\paragraph{HILL Entropy.}
As already discussed in the introduction, often we want to derive a key from a distribution $(X,Z)$ where there's no ``real" min-entropy at all 
$\avminentr(X|Z)=0$. 
This is for example the case when $Z$ is the transcript (that can be observed by an adversary) of a key-exchange protocol like Diffie-Hellman, where the agreed value $X=g^{ab}$ is determined by the transcript $Z=(g^a,g^b)$ \cite{C:Krawczyk10,EC:GenKraRab04}. Another setting where this can be the case is in the context of side-channel attacks, where the leakage $Z$ from a device can completely determine its internal state $X$.  

If $X|Z$ has $k$ bits of HILL entropy, i.e., is computationally indistinguishable from having min-entropy $k$ (cf. Def.~\ref{def:hill}) 
we can derive keys exactly as described above assuming $X|Z$ had $k$ bits of min-entropy. In particular, if $X|Z$ has $|K|+2\log(1/\eps)$ bits of HILL entropy for some negligible $\eps$, we can derive a key $K$ that is pseudorandom,  and if $X|Z$ has  
$|K|+\log\log(1/\eps)$ bits of HILL entropy, we can derive a key that is almost as good as a uniform one for any unpredictability application.

\paragraph{Unpredictability Entropy.}
Clearly, the minimal assumption we must make on a distribution $(X,Z)\in\bin^n\times\bin^m$ for any 
key derivation to be possible at all is that $X$ is hard to compute given $Z$, that is, $X|Z$ must have some unpredictability entropy as in Definition~\ref{def:uentropy}. 
Goldreich and Levin \cite{STOC:GolLev89} show how to generate pseudorandom bits 
from such a source. In particular, the Goldreich-Levin theorem implies that if $X|Z$ has 
at least $2\log\eps^{-1}$ bits of list-unpredictability, then the inner product $R^TX$ of $X$ with a random vector $R$ is $\eps$ indistinguishable from uniformly random (the loss in circuit size is 
$poly(n,m)/\eps^4$). Using the chain rule for unpredictability entropy,\footnote{Which states that if 
$X|Z$ has $k$ bits of list-unpredictability, then for any $(A,R)$ where $R$ is independent of $(X,Z)$, 
$X|(Z,A,R)$ has $k-|A|$ bits of list-unpredictability entropy. In particular, extracting $\ell$ 
inner product bits, decreases the list-unpredictability by at most $\ell$.} we can generate an
$\ell=k-2\log\eps^{-1}$ bit long pseudorandom string that is $\ell \eps$ 
indistinguishable (the extra $\ell$ factor comes from taking the union bound over all bits) from uniform.

Thus, we can turn $k$ bits of list-unpredictability into $k-2\log\eps^{-1}$ bits of pseudorandom 
bits (and thus also that much HILL entropy) with quality roughly $\eps$. The question whether it's possible to generate significantly more than $k-2\log\eps^{-1}$ of HILL entropy from a 
source with $k$ bits of (list-)unpredictability seems to have never been addressed in the literature before. The reason might be that one usually is interested in generating pseudorandom bits (not just HILL entropy), and for this, the $2\log\eps^{-1}$ entropy loss is inherent. The observation that for many applications high HILL entropy is basically as good as pseudorandomness is more recent, and recently gained 
attention by its usefulness in the context of leakage-resilient cryptography \cite{FOCS:DziPie08,TCC:DodYu13}. 

In this paper we prove that it's in fact possible to turn almost all list-unpredictability into HILL entropy. 

\section{Condensing Unpredictability}
\label{S:CU}
Below we state Theorem~\ref{T:GLC} whose proof is in Appendix~\ref{A:PT2}, but first, let us give some intuition. Let $X|Z$ have $k$ bits of list-unpredictability, and assume we start extracting Goldreich-Levin hardcore bits $A_1,A_2,\ldots$ 
by taking inner products $A_i=R_i^TX$ for random $R_i$. The first extracted bits $A_1,A_2,\ldots$ will be pseudorandom (given the $R_i$ and $Z$), but with every extracted bit, the list-unpredictability can also decrease by one bit. As the GL theorem requires at least 
$2\log\eps^{-1}$ bits of list-unpredictability to extract an $\eps$ secure pseudorandom bit, we must stop after 
$k-2\log\eps^{-1}$ bits. In particular, the more we extract, the worse the pseudorandomness of the extracted string becomes. Unlike the original GL theorem, in our Theorem~\ref{T:GLC} we only argue about the unpredictability of the extracted string, and unpredictability entropy has the nice property that it can never decrease, i.e., predicting $A_1,\ldots,A_{i+1}$ is always at least as hard as predicting $A_1,\ldots,A_{i}$. Thus, despite the fact that once $i$ approaches $k$ it becomes easier and easier to predict  $A_i$ 
(given $A_1,\ldots,A_{i-1},Z$ and the $R_i$'s)\footnote{The only thing we know about the last extracted bit $A_k$ is that it cannot be predicted with advantage $\ge 0.75$, more generally, $A_{k-j}$ cannot be predicted with 
advantage $1/2+1/2^{j+2}$.} 
 this hardness will still add up to $k-O(1)$ bits of unpredictability entropy. 
 
The proof is by contradiction, we assume that $A_1,\ldots,A_k$ can be predicted with advantage $2^{-k+3}$ (i.e., does not have $k-3$ bits of unpredictability), and then use such a predictor to predict $X$ with advantage $>2^{-k}$, contradicting the $k$ bit list-unpredictability of $X|Z$. 

If $A_1,\ldots,A_k$ can be predicted as above, then there must be an index $j$ s.t. 
$A_j$ can be predicted with good probability conditioned on $A_1,\ldots,A_{j-1}$ being correctly predicted. 
We then can use the Goldreich-Levin theorem, which tells us how to find $X$ given such a predictor. 
Unfortunately, $j$ can be close to $k$, and to apply the GL theorem, we first need to find the right
values for $A_1,\ldots,A_{j-1}$ on which we condition, and also can only use the predictor's guess 
for $A_j$ if it was correct on the first $j-1$ bits. We have no better strategy for this than trying all possible values, and this is the reason why the loss in circuit size in Theorem~\ref{T:GLC} depends on $2^k$. 

In our proof, instead of using the Goldreich-Levin theorem, we will actually use a more fine-grained variant due to Hast which allows to distinguish between errors and erasures (i.e., cases where we know that we don't have
any good guess. As outlined above, this will be the case whenever the predictor's guess for the 
first $j-1$ inner products was wrong, and thus we can't assume anything about the $j$th guess being correct). This will give a much better quantitative bound than what seems possible using GL. 
\begin{theorem}[Condensing Upredictability Entropy]
\label{T:GLC}
Consider any distribution $(X,Z)$ over $\bin^n\times\bin^m$ where
$$
\UentrX{\eps,s}(X|Z)\ge k
$$
then for a random $R\gets \bin^{k\times n}$
$$\Uentr{\eps,t}(R.X|Z,R)\ge k-\Delta$$
where\footnote{We can set $\Delta$ to be any constant $>1$ here, but choosing a smaller $\Delta$ would imply a smaller $t$.}
$$
t=\frac{s}{2^{2k}\operatorname{poly}(m,n)}
\qquad,
\qquad
\Delta=3
$$
\end{theorem}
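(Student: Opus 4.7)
The plan is to prove the contrapositive via a hybrid-plus-list-decoding argument. Assume some circuit $P$ of size $t$ predicts $K=R\cdot X$ from $(Z,R)$ with probability strictly greater than $2^{-(k-3)}=8/2^{k}$. Write $R=(R_1,\ldots,R_k)$ and $A_i=R_i^T X$, so $K=(A_1,\ldots,A_k)$, and let $P_j(Z,R)$ denote the $j$-th output bit of $P$. From $P$ I will construct an $\Eq$-oracle adversary of size $s$ that recovers $X$ from $Z$ with probability $>2^{-k}$, contradicting $\UentrX{\eps,s}(X|Z)\ge k$.

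The first step is a telescoping hybrid. Expanding
$$\Pr\bigl[\forall j\colon P_j(Z,R)=A_j\bigr]=\prod_{j=1}^{k}\Pr\bigl[P_j=A_j\,\big|\,P_{<j}=A_{<j}\bigr]$$
and using the hypothesis that this product exceeds $8/2^{k}$, an averaging argument yields an index $j^\star\in[k]$ and a fixed prefix $a^\star_{<j^\star}\in\{0,1\}^{j^\star-1}$ such that, conditioned on the event $\mathcal{E}$ that $P_{<j^\star}(Z,R)=a^\star_{<j^\star}=A_{<j^\star}$, the coordinate $P_{j^\star}(Z,R)$ equals $A_{j^\star}$ with bias at least $\delta$ over $1/2$, while $\Pr[\mathcal{E}]\ge p$, with the product $\delta\cdot p$ bounded below by $\Omega(2^{-k}/k)$ after distributing the slack $8$ across the $k$ factors.

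Next, turn $P$ into a single-bit predictor $P'$ for the Goldreich--Levin bit $R_{j^\star}^T X$ with erasures: on input $(Z,R_{j^\star})$, sample $R_i$ freshly for $i\ne j^\star$, run $P(Z,R)$, return $P_{j^\star}(Z,R)$ whenever $P_{<j^\star}(Z,R)=a^\star_{<j^\star}$, and otherwise output the erasure symbol $\bot$. By construction, on its non-erasure answers $P'$ has bias $\delta$, with non-erasure probability $p$; this is precisely the input format for Hast's erasure-aware variant of Goldreich--Levin. Applying that list-decoder to $P'$ yields a circuit of size $\operatorname{poly}(n,m)/(\delta^{O(1)} p^{O(1)})$ that outputs a list of $O(1/\delta^{2})$ candidates for $X$, one of which coincides with $X$ with constant probability over $(X,Z)$. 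Querying $\Eq$ on every candidate then isolates $X$ with probability at least $1/4$, contradicting the list-unpredictability hypothesis.

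The main obstacle is the size accounting. Two sources of blow-up arise: hard-wiring the prefix $a^\star_{<j^\star}$ (non-uniform advice of length $j^\star-1\le k$, costing a $2^{k}$ factor when one searches over it uniformly) and the Hast reduction, which costs $\operatorname{poly}/(\delta^{2}p)$. The key point that keeps the total at $2^{2k}\operatorname{poly}(m,n)$ rather than $2^{4k}$ is precisely the use of the erasure-aware variant of Goldreich--Levin: absorbing the prefix-match probability $p$ into the bias would force the effective bias to drop to $\delta p$ and then enter the reduction as $1/(\delta p)^{4}$, whereas Hast's formulation keeps $\delta$ and $p$ separated at quadratic cost each. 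Verifying that these constants really line up to give $\Delta=3$ (rather than some larger constant) is where the care in the hybrid bookkeeping is most needed.
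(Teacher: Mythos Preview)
Your overall plan---isolate a biased coordinate via a hybrid, feed it to Hast's erasure-aware Goldreich--Levin, use $\Eq$ to pick the right candidate---is exactly the paper's approach. But the single-bit predictor $P'$ as you describe it does not work.

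You have $P'$ ``sample $R_i$ freshly for $i\ne j^\star$'' on each call and output $P_{j^\star}$ whenever $P_{<j^\star}(Z,R)=a^\star$. The bias guarantee, however, holds only on the event $\mathcal{E}=\{P_{<j^\star}=a^\star\}\cap\{A_{<j^\star}=a^\star\}$, and since $A_i=R_i^T X$ depends on the freshly sampled $R_{<j^\star}$, your predictor cannot check the second half. On non-erasure calls where $P_{<j^\star}=a^\star$ but $A_{<j^\star}\ne a^\star$ you have no correctness guarantee; these are \emph{errors}, not erasures, and they dominate. This destroys the error/erasure separation that makes Hast's reduction cheap.

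The paper's fix is to sample the prefix rows $r^{i-1}$ \emph{once, outside} the predictor, and hold them fixed across all calls (only $r_{i+1}^k$ is resampled). For the fixed target $x$, the correct prefix $a^{i-1}=r^{i-1}\cdot x$ is then a single unknown string; the algorithm enumerates all $2^{i-1}$ candidates, and for the correct one the test $\hat A^{i-1}=a^{i-1}$ is exactly $\hat A^{i-1}=A^{i-1}$. This also explains why averaging cannot produce a single global $a^\star$ as non-uniform advice: the correct prefix depends on $x$ and on the sampled $r^{i-1}$, so the $2^k$ cost is a per-instance enumeration.

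Relatedly, the good index and the ``constant probability over $(X,Z)$'' claim need a Markov step you omit. The paper first restricts to a $2^{-k+1}$-fraction of ``good'' pairs $(x,z)$ for which $\Pr_r[\algA(z,r)=r\cdot x]\ge 2/2^k$; only on those does the product argument yield a biased index $i$ (which may depend on $(x,z)$). The inverter then succeeds with probability $>1/2$ on good pairs, hence $>2^{-k}$ overall---a constant global success probability is neither achieved nor required.
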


\section{High Unpredictability implies Metric Entropy}
\label{S:H2M}
In this section we state our main results, showing that $k$ bits of unpredictability entropy imply the same amount 
of HILL entropy, with a loss exponential in the ``entropy gap". The proof is in Appendix~\ref{A:Pmain}.
\begin{theorem}[Unpredictability Entropy Implies HILL Entropy]
\label{T:m2}
For any distribution $(X,Z)$ over $\bin^n\times\bin^m$, if $X|Z$ has unpredictability entropy
\begin{equation}
\label{e:imp0}
\Uentr{\gamma,s}(X|Z)\ge k
\end{equation}
 then, with $\Delta=n-k$ denoting the entropy gap, $X|Z$ has (real valued, deterministic) metric entropy
\begin{equation}
\label{e:important}
\metricentrdr{\eps+\gamma,t}(X|Z)\ge k
\quad\textrm{for}\quad t = \Omega\left( s \cdot \frac{ \eps^5}{2^{5\Delta}\log^2\left( 2^{\Delta}\epsilon^{-1} \right)}\right)
\end{equation}
By Lemma~\ref{L:M2H} this further implies
that $X|Z$ has, for any $\delta>0$, HILL entropy 
$$
\hillentr{\eps+\delta+\gamma,\Omega(t\delta^2/(n+m))}(X|Z)\ge k
$$
which for $\eps=\delta=\gamma$ is
$$
\hillentr{3\eps,\Omega(s\cdot \eps^7/2^{5\Delta}(n+m)\log^2\left( 2^{\Delta}\epsilon^{-1} \right))}(X|Z)\ge k
$$
\end{theorem}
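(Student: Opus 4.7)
The plan is to prove the first implication (unpredictability $\Rightarrow$ real-valued deterministic metric entropy) by contraposition; the final HILL bound then follows directly from Lemma~\ref{L:M2H} applied with parameter $\delta$, incurring the extra multiplicative $\delta^2/(n+m)$ in size and additive $\delta$ in error (which is why the stated bound has $\eps^7$ while the metric step will give $\eps^5$). Assume the metric conclusion fails: there is a deterministic $D\colon\bin^n\times\bin^m\to[0,1]$ of size $t$ such that for every $(Y,Z)$ with $\avminentr(Y|Z)\ge k$ we have $\E[D(X,Z)]-\E[D(Y,Z)]>\eps+\gamma$ (negate $D$ if needed). Using the smoothing in Definition~\ref{def:uentropy}, I first replace $X$ by the nearby $Y_0$ guaranteed by $\Uentr{\gamma,s}(X|Z)\ge k$ (with $(X,Z)\sim_{\gamma,s}(Y_0,Z)$ and $Y_0$ hard to predict from $Z$), absorbing $\gamma$ to obtain $\E[D(Y_0,Z)]-\E[D(Y,Z)]>\eps$ for every valid comparison distribution $Y$. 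The goal is to build a circuit of size roughly $s$ that guesses $Y_0$ from $Z$ with probability strictly exceeding $2^{-k}$, contradicting $\Uentr{\gamma,s}(X|Z)\ge k$.

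The optimal $Y$ is explicit: letting $T_z\subseteq\bin^n$ be the $2^k$ values $x$ maximizing $D(x,z)$ and taking $Y^*$ uniform on $T_z$ conditional on $Z=z$ attains $\max_Y\E[D(Y,Z)]$, so $Y_0|Z$ is ``more concentrated'' on high $D$-values than the flat top-$2^k$ distribution. To convert this concentration into a predictor, I pass through the randomized-threshold identity $D(x,z)=\int_0^1\mathbf{1}\{D(x,z)>\tau\}\,d\tau$: averaging over $\tau$ and applying an averaging argument yields a single $\tau^*\in[0,1]$ together with the level set $T'_z=\{x:D(x,z)>\tau^*\}$ satisfying $\Pr[Y_0\in T'_Z]-\E_z[\min(|T'_z|/2^k,1)]>\eps$. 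The candidate predictor $A$ on input $z$ then calls $D$ as a subroutine: sample $\tilde O(2^{\Delta}/\eps^2)$ uniform candidates $x$, estimate $\tau^*$ via the empirical quantile at level $2^{-\Delta}$, and output a uniformly chosen sampled candidate whose $D$-value exceeds the estimate.

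The main obstacle is that a direct analysis of this predictor gives success probability of order $\eps\cdot 2^{-k}$ rather than unconditionally beating $2^{-k}$: a case analysis on whether $|T'_z|\le 2^k$ shows that the $\eps$-gap only boosts over uniform guessing in $T'_z$ by a factor of $\Omega(\eps)$. To close this, I would instead run the procedure at $O(\Delta)$ different target set-sizes $2^{k-j}$ (for $j=1,\dots,\Delta$), outputting the prediction from the best-performing level; this exploits the fact that the distinguishing advantage must be ``localized'' at some scale where $Y_0|Z$ places noticeably more mass on a set substantially smaller than $2^k$, which in turn gives the necessary constant-factor boost. Controlling the Chernoff--Hoeffding error in the quantile estimates and union-bounding across the $O(\Delta)$ levels contributes the $\log^2(2^\Delta/\eps)$ factor, while the per-level sampling cost $\tilde O(2^\Delta/\mathrm{poly}(\eps))$ combined with amplification across restarts accumulates to the $2^{5\Delta}/\eps^5$ loss claimed in~(\ref{e:important}). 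The HILL bound is then an immediate corollary of Lemma~\ref{L:M2H}.
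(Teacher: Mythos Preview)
Your proposal has a genuine gap at the heart of the argument: the reduction from the real-valued conditional distinguisher to a predictor that beats $2^{-k}$.

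Your averaging argument produces a \emph{global} threshold $\tau^*$, so the level sets $T'_z=\{x:D(x,z)>\tau^*\}$ have sizes that vary arbitrarily with $z$. When you then sample uniformly from $T'_z$, the success probability is $\E_{z}\bigl[\Pr[Y_0\in T'_z\mid Z=z]\,/\,|T'_z|\bigr]$, and the $z$-dependent normalizer $|T'_z|$ cannot be pulled out of the expectation. The inequality you have, $\E_z\Pr[Y_0\in T'_z]>\eps+\E_z\min(|T'_z|/2^k,1)$, simply does not imply $\E_z[\Pr[Y_0\in T'_z]/|T'_z|]>2^{-k}$; you correctly note that the direct bound is only $\Theta(\eps)\cdot 2^{-k}$. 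Your multi-scale patch (trying target sizes $2^{k-j}$) does not fix this: even at a fixed scale, the threshold realizing that set size is $z$-dependent, so you are no longer working with the $T'_z$ from the global $\tau^*$, and the $\eps$-gap you established for that $\tau^*$ says nothing about these other sets. There is also an internal mismatch: you derive a global $\tau^*$ but then have the predictor compute a $z$-dependent empirical quantile, and the analysis never connects the two.

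The paper's proof solves exactly this problem, but by a different mechanism. It characterizes the optimal $Y^*|Z$ via KKT conditions and obtains \emph{$z$-dependent} thresholds $t(z)$ with the key property that $\sum_x\max(D(x,z)-t(z),0)=\lambda$ is \emph{constant in $z$}. The modified distinguisher $D'(x,z)=\max(D(x,z)-t(z),0)$ then satisfies $\E D'(U,z)=2^{-n}\lambda$ for every $z$, so the rejection-sampling predictor (output $x$ with probability $\propto D'(x,z)$) has a normalizer independent of $z$; this is what lets the $\eps$-advantage translate cleanly into a $>2^{-k}$ guessing probability. The remaining work is to approximate $t(z)$ on the fly given $\lambda$ as non-uniform advice (via a binary-search-plus-sampling routine), which is where the $2^{5\Delta}/\eps^5$ and $\log^2(2^\Delta/\eps)$ losses actually come from. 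Your sketch does not contain this idea, and without it the contradiction does not close.
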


\bibliography{abbrev3,crypto,pseudoentropy}
\bibliographystyle{alpha}
\addcontentsline{toc}{section}{References}

\appendix
\newpage
\section{Figures}
\label{S:fig}
\begin{figure}[h]
\begin{center}
\scalebox{0.8}{
\begin{tabular}{|l|c|c|c|c|c|c|}
\hline
\multicolumn{4}{|c|}{Deriving a (pseudo)random key of length $|K|=k-2\log\eps^{-1}$}
\\
\multicolumn{4}{|c|}{from a source $(X,Z)\in\bin^n\times\bin^m$ where $X|Z$ has $k$ bits (min/HILL/list-unpredictability) entropy}
\\
\hline
Entropy&Entropy quantity and &Derive key $K$ of&
Quality of derived key
\\
type&quality of source&length $k-2\log\eps^{-1}$ as&
$\hillentr{\eps',s'}(K|Z,S)=k-2\log \eps^{-1}=|K|$
\\
&&&equivalently\\
&&&$(K,Z,S)\sim_{\eps',s'}(U_{|K|},Z,S)$\\
\hline
min&
$\avminentr(X|Z)=k$&
$K=\ext(X,S)$&
$\eps'=\eps \qquad s'=\infty$
\\
\hline
HILL&
$\hillentr{\delta,s}(X|Z)=k$&
$K=\ext(X,S)$&
$\eps'=\eps+\delta \qquad s'\approx s$\\
\hline
Unpredict.&
$\UentrX{\delta,s}(X|Z)=k$&
$K=\GL(X,S)=S^TX$&
$
\eps'= m\eps+\delta\qquad s'=s\cdot \eps^4/poly(m,n)
$
\\
\hline
\end{tabular}
}
\end{center}
\caption{
\label{fig1}
Bounds on deriving a (pseudo)random key $K$ of length $|K|=k-2\log \eps^{-1}$ bit 
from a source 
$X|Z$ with $k$ bits  of min, HILL or list-unpredictability entropy. $\ext$ is a 
strong extractor (e.g. leftover hashing), and 
$\GL$ denotes the Goldreich-Levin construction, which for $X\in\bin^n$ and 
$S\in\bin^{n\times |K|}$ is simply defined as  $\GL(X,S)=S^TX$. 
Leftover hashing requires a seed of length $|S|=2n$ (extractors with a much shorter seed 
$|S|=O(\log n+\log \eps^{-1})$ that extract $k-2\log\eps^{-1}-O(1)$ bits also exist), whereas Goldreich-Levin 
requires a longer $|S|=|K|n$ bit seed. 
The above bound for HILL entropy even holds if $X|Z$ only has 
$k$ bits of probabilistic boolean metric entropy (a notion implying the same amount of HILL entropy, albeit with a 
loss in circuit size), as shown in 
Theorem 2.5 of \cite{cryptoeprint:2012:466}
}
\end{figure}
\begin{figure}[h]
\begin{center}
\scalebox{0.8}{
\begin{tabular}{|r|c|c|c|c|}
\hline
\multicolumn{4}{|c|}{Deriving $k$ bit key $K$
with high HILL entropy from 
$X|Z$ with $k$ bits (min/HILL/list-unpredictability) entropy
}\\
\hline
Entropy&Entropy quantity and &Derive key of&Quantity and quality of HILL entropy of $K$
\\
type&quality of soucre&length $|K|=k$ as&
$\hillentr{\eps',s'}(K|Z,S)\ge k-\Delta$
\\
\hline
min&
$\avminentr(X|Z)=k$&
$K=\cond(X,S)$&
$\eps'=\eps
\qquad
s'=\infty\qquad \Delta=\log\log\eps^{-1}$
\\
\hline
HILL&
$\hillentr{\delta,s}(X|Z)=k$&
$K=\cond(X,S)$&
$
\eps'=\eps+\delta
\qquad
s'\approx s\qquad \Delta=\log\log\eps^{-1}$
\\
\hline
Unpredict.&
$\UentrX{\delta,s}(X|Z)=k$&
$K=\GL(X,S)=S^TX$&
$\eps'=\eps+\delta\qquad s'=s\cdot {\eps^7}/{2^{2k}poly(m,n)}\qquad \Delta=3$
\\
\hline
\end{tabular}
}
\end{center}
 \caption{
 \label{fig2}
Bounds on deriving a key of length $k$ with min (or HILL) entropy $k-\Delta$ from a source $X|Z$ with $k$ bits of min, HILL or unpredictability entropy. $\cond$ denotes a 
$(\log\eps+1)$ wise independent hash function, which is 
shown to be a good condenser (as stated in the table) for min-entropy in \cite{EC:DPW14}. The bounds for HILL entropy follow directly from the bound for min-entropy. The last row follows from the results in this paper as stated in Proposition~\ref{P:main}.
}
\end{figure}

\section{Proof of Theorem~\ref{T:GLC}}
\label{A:PT2}
We will use the following theorem due  Hast \cite{EC:Hast03} on decoding Hadamard code with errors and erasures.
\begin{theorem}[\cite{EC:Hast03}]
\label{thm:erasures}
There is an algorithm {\sf LD} that, on input $l$ and $n$ and with oracle access to a binary Hadamard code of $x$ (where $|x|=n$) with an $e$-fraction of errors and an $s$-fraction of erasures, can output a list of $2^l$ elements in time $O(nl2^l)$ asking $n2^l$ oracle queries such that the probability that $x$ is contained in the list is at least $0.8$ if $l\ge \log_2(20n(e+c)/(c-e)^2+1)$, where $c=1-s-e$ (the fraction of the correct answers from the oracle).
\end{theorem}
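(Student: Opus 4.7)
The plan is to adapt the Goldreich--Levin list-decoding procedure to the errors-plus-erasures setting by using signed $\{-1,0,+1\}$-valued votes, and then to apply pairwise independence together with Chebyshev's inequality. The algorithm $\sf LD$ I would describe works as follows: sample $l$ independent uniform seeds $s^1,\ldots,s^l\in\{0,1\}^n$, and for each non-empty $J\subseteq[l]$ set $r^J=\bigoplus_{j\in J} s^j$. For every position $i\in[n]$ and every non-empty $J$, query the oracle once at $r^J\oplus e_i$ and store the response $o_i^J\in\{0,1,\bot\}$. This costs $n(2^l-1)\le n2^l$ oracle queries. Then, for each of the $2^l$ guesses $\sigma\in\{0,1\}^l$ for the true syndrome $(\langle s^1,x\rangle,\ldots,\langle s^l,x\rangle)$, set $b^J=\bigoplus_{j\in J}\sigma^j$ and let $y_i^\sigma$ be the majority bit of $\{b^J\oplus o_i^J:o_i^J\ne\bot\}$; output the list of $2^l$ candidates $y^\sigma$. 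A naive per-$\sigma$ computation would cost $O(n 2^{2l})$, but batching the majority tallies across all $\sigma$ via a fast Walsh--Hadamard transform on the stored oracle responses yields the claimed $O(nl 2^l)$ total running time.

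For correctness I would condition on the unique correct $\sigma^\ast=(\langle s^1,x\rangle,\ldots,\langle s^l,x\rangle)$. Fix a position $i$ and assign signed votes $V_J\in\{-1,0,+1\}$: $V_J=+1$ when $o_i^J$ equals $H_x(r^J\oplus e_i)$, $V_J=-1$ when $o_i^J$ is the flipped bit, and $V_J=0$ on an erasure. The majority vote recovers $x_i$ exactly when $T:=\sum_{J\ne\emptyset}V_J>0$. Because the family $\{r^J\oplus e_i\}_{J\ne\emptyset}$ is pairwise independent and uniform over $\{0,1\}^n$ (the standard property of parities of i.i.d.\ uniform seeds), the $V_J$ are pairwise independent with $\mathbb{E}[V_J]=c-e$ and $\mathbb{E}[V_J^2]=c+e$. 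Hence $\mathbb{E}[T]=(c-e)(2^l-1)$ and $\mathrm{Var}(T)\le(c+e)(2^l-1)$, and Chebyshev yields
$$
\Pr\bigl[T\le 0\bigr]\;\le\;\frac{\mathrm{Var}(T)}{\mathbb{E}[T]^2}\;\le\;\frac{c+e}{(c-e)^2(2^l-1)}.
$$
A union bound over $i\in[n]$ upper-bounds the probability that $y^{\sigma^\ast}\ne x$ by $\frac{n(c+e)}{(c-e)^2(2^l-1)}$. The hypothesis $l\ge\log_2\bigl(20n(c+e)/(c-e)^2+1\bigr)$ is exactly $2^l-1\ge 20n(c+e)/(c-e)^2$, which forces the union-bound failure probability to at most $1/20$, comfortably below $0.2$, so $x$ appears on the list with probability at least $0.8$ as required. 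The extra factor of four compared to the bare Chebyshev calculation absorbs a tighter accounting around ties in the non-erasure vote and the precise integer-valued threshold $T>0$.

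I expect two main obstacles. The algorithmic one is arranging the vote computation so the running time is $O(nl2^l)$ rather than the naive $O(n 2^{2l})$; the key observation is that for each fixed $i$, the count of non-erasure votes equal to $1$ as $\sigma$ ranges over $\{0,1\}^l$ is the evaluation of a Walsh--Hadamard transform of the vector $(o_i^J)_J$ (together with erasure indicators), which can be computed in $O(l2^l)$ time via the FWT. The probabilistic one is handling erasures without paying separately in the concentration bound: the $\{-1,0,+1\}$ signed-vote encoding is exactly what makes the second moment equal to $c+e$ (ignoring the erasure fraction $s=1-c-e$) and lets a single Chebyshev bound give the clean final dependence on $c$ and $e$ alone.
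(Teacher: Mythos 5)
The paper does not prove this theorem; it imports it verbatim from Hast~\cite{EC:Hast03}, so there is no in-paper argument to compare against. Your reconstruction is a correct proof along the standard Goldreich--Levin list-decoding lines, and it is essentially the same signed-vote argument that Hast's analysis formalizes. A few concrete checks: the family $\{r^J \oplus e_i\}_{\emptyset\neq J\subseteq[l]}$ is indeed pairwise independent and uniform, so your $\{-1,0,+1\}$-valued votes $V_J$ are pairwise independent with $\mathbb{E}[V_J]=c-e$ and $\mathrm{Var}(V_J)\le\mathbb{E}[V_J^2]=c+e$; Chebyshev then gives $\Pr[T\le 0]\le (c+e)/\bigl((c-e)^2(2^l-1)\bigr)$, and after a union bound over $n$ positions the hypothesis $2^l-1\ge 20n(c+e)/(c-e)^2$ bounds the failure probability by $1/20$, which is more than enough for the claimed $0.8$ (a factor of $5$ would already suffice). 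The query count $n(2^l-1)\le n2^l$ is right, and the $O(nl2^l)$ running time does follow because the per-position sums $\sum_{J} a_J(-1)^{\langle\sigma,\chi_J\rangle}$, with $a_J\in\{-1,0,+1\}$ derived from the stored oracle responses, are exactly a Walsh--Hadamard transform of a length-$2^l$ vector, computable for all $\sigma$ in $O(l2^l)$ via the fast transform. Your closing speculation that the factor $20$ (rather than $5$) ``absorbs a tighter accounting around ties'' is not needed for correctness: the computation already handles ties conservatively by counting $T=0$ as a failure, and the remaining factor of $4$ is simply slack in the stated constant. On balance the proposal is a sound and complete proof of the cited result; the only genuinely compressed step is the FWT batching claim, which you could expand by spelling out the vector $a_J = (1-2o_i^J)\cdot\mathbf{1}[o_i^J\neq\bot]$ (with $a_\emptyset=0$) and noting that the desired signed tally at each $\sigma$ is its Walsh--Hadamard coefficient.
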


We'll often consider sequences $v_1,v_2,\ldots$ of values and will use the notation  
$v_a^b$ to denote $(v_a,\ldots,v_b)$, with $v_a^b=\emptyset$ if $a>b$. 
$v^b$ is short for $v_1^b=(v_1,\ldots,v_b)$.

\begin{proof}[of Theorem~\ref{T:GLC}]
It's sufficient to prove the theorem for $\eps=0$, the general case $\eps\ge 0$ then follows directly by the definition 
of unpredictability entropy. To prove the theorem we'll prove its contraposition
\begin{equation}
\label{e:contra}
\Uentr{t}(R.X|Z,R)< k-\Delta
\quad\Rightarrow\quad
\UentrX{s}(X|Z)< k
\end{equation}
The left-hand side of (\ref{e:contra}) means there exists a circuit $\algA$ of size $|\algA|\le t$ such that
\begin{equation}
\label{e:A}
\Pr_{(x,z)\gets(X,Z),r\gets \bin^{k\times n}}[\algA(z,r)=r.x]\ge 2^{-k+\Delta}
\end{equation}
It will be convenient to assume that $\algA$ initially flips a coin $b$, and if $b=0$ outputs 
a uniformly random guess. This loses at most a factor $2$ in $\algA$'s advantage, i.e.,
\begin{equation}
\label{e:A2}
\Pr_{(x,z)\gets(X,Z),r\gets \bin^{k\times n}}[\algA(z,r)=r.x]\ge 2^{-k+\Delta-1}
\end{equation}
but now we can assume that for any $z,r$ and $w\in\bin^k$
\begin{equation}
\label{e:dummy}
\Pr[\algA(z,r)=w]\ge 2^{-k-1}
\end{equation}
Using Markov eq.(\ref{e:A2}) gives us
\begin{equation}
\label{e:Ma}
\Pr_{(x,z)\gets(X,Z)}[\Pr_{r\gets \bin^{k\times n}}[\algA(z,r)=r.x]\ge 2^{-k+\Delta-2}]\ge
2^{-k+\Delta-2} 
\end{equation}
We call $(x,z)\in\supp{(X,Z)}$ ``good" if 
\begin{equation}
\label{e:good}
(x,z)\textrm{ is good }\iff
\Pr_{r\gets \bin^{k\times n}}[\algA(z,r)=r.x]\ge 2^{-k+\Delta-2}
\end{equation}
Note that by eq.(\ref{e:Ma}), $(z,x)\gets (Z,X)$ is good with probability $\ge 2^{-k+\Delta-2}$. 

We will use $\algA$ to construct a new circuit $\algB$ of size $s=O(t2^{2k}\operatorname{poly}(n))$ 
where 
\begin{equation}
\label{e:B}
\Pr_{(x,z)\gets (X,Z)}[\algB(z)=x\ | (x,z)\textrm{ is good}]> 1/2
\end{equation}
Which with (\ref{e:B}) and (\ref{e:Ma}) further gives 
\begin{eqnarray}
\label{e:B2}
\Pr_{(x,z)\gets (X,Z)}[\algB(z)=x]
&=&
\nonumber
\Pr[\algB(z)=x | (x,z)\textrm{ is good}]
\cdot\Pr[(x,z)\textrm{ is good}]\\
&>& 2^{-1}\cdot 2^{-k+\Delta-2}=2^{-k+\Delta-3}
\end{eqnarray}
contradicting the right-hand side of (\ref{e:contra}), and thus proving the theorem.

We'll now construct $\algB$ satisfying (\ref{e:B}), for this, consider any good $(x,z)$. 
Let $R=R^k=(R_1,\ldots,R_k)$ be uniformly random and let $A=A^k=(A_1,\ldots,A_k)$ where $A_i=R_i.x$. 

Let $\hat A\gets \algA(z,R)$ and define $\epsilon_i=\Pr_{R}[\hat A_i=A_i| \hat A^{i-1}=A^{i-1}]$. Using (\ref{e:good}) in the last step
$$
\prod_{i=1}^k\epsilon_i 
=\Pr_R[A=\hat A]=\Pr_R[\algA(z,R)=R.x]\ge 2^{-k+\Delta-2}
$$
Thus, here exists an $i$ s.t., $\epsilon_i\ge 
2^{\frac{-k+\Delta-2}{k}}=\frac{1}{2}+\delta$ with 
$\delta\approx \frac{\Delta-2}{k}\cdot\frac{\ln(2)}{2}$. We fix this $i$ (we don't know 
which $i$ is good, and later will simply try all of them). Then

$$
\E_{R^{i-1}}[\Pr_{R_i,R_{i+1}^k}[\hat A_i=A_i \ |\ 
\hat A^{i-1}=A^{i-1}
]
]\ge 1/2+\delta
$$
Using Markov
\begin{equation}
\label{e:M}
\Pr_{R^{i-1}}[
\Pr_{R_i,R_{i+1}^k}[\hat A_i=A_i \ |\ 
\hat A^{i-1}=A^{i-1}
]\ge 1/2+\delta/2
]\ge \frac{\delta}{2}
\end{equation}
We call $r^{i-1}$ good if (note that by the previous equation a random $r^{i-1}$ is good with probability $\ge \delta/2$).
\begin{equation}
\label{e:goodr}
r^{i-1}\textrm{ is good }\iff
\Pr_{R_i,R_{i+1}^k}[\hat A_i=A_i \ |\ 
\hat A^{i-1}=A^{i-1}
]\ge 1/2+\delta/2
\end{equation}
From now on, we fix some good $r^{i-1}$ and assume  we know 
$a^{i-1}=r^{i-1}.x$ (later we'll simply try all possible choices for $a^{i-1}$). 

We define a predictor $\algP_i(r_i)$ that
tries to predict $r_i.x$ given a random $r_i$ (and also knows $z,r^{i-1},a^{i-1}$ as above) as 
follows
\begin{enumerate}
\item Sample random $r_{i+1}^k\gets R_{i+1}^k$
\item Invoke $\hat A^k \gets \algA(z,r^{(i)},x)$. Note that $r^{(i)}=(r^{i-1},r_i,r_{i+1}^k)$ consists of the fixed $r^{i-1}$, the input $r_i$ and the randomly sampled $r_{i+1}^k$.
\item
if $\hat A^{i-1}=a^{i-1}$ output $\hat A_i$, otherwise output $\bot$.
\end{enumerate}
Using (\ref{e:dummy}), which implies $\Pr[\hat A^{i-1}=a^{i-1}]\ge 2^{-i}$,  
and (\ref{e:goodr}) we can lower bound $\algP_i$'s rate and advantage as
\begin{eqnarray}
\nonumber
\Pr_{R_i}[\algP_i(R_i)\neq\bot]&=&\Pr[\hat A^{i-1}=a^{i-1}] \ge 2^{-i},\\
\Pr_{R_i}[\algP_i(R_i)=R_i.x]&\ge&\Pr[\hat A^{i-1}=a^{i-1}] (\frac{1}{2}+\delta/2). 
\label{e:Pgood}
\end{eqnarray}

In terms of Theorem~\ref{thm:erasures}, we have a binary Hadamard code with $e+c=\Pr[\hat A^{i-1}=a^{i-1}]$, $c-e=\delta\cdot\Pr[\hat A^{i-1}=a^{i-1}]$, which implies that $(e+c)/(c-e)^2\le\frac{2^i}{\delta^2}$.


Now Theorem~\ref{thm:erasures} implies that given such a predictor $\algP$ we can output a list 
that contains $x$ with probability $>0.8$ in time $O(2^i\operatorname{poly}(m,n))=O(2^k\operatorname{poly}(m,n))$, as we assume access to an oracle $\Eq$ with outputs $1$ on input $x$ and $0$ otherwise, we can find $x$ in this list with the same probability. 


Using this, we can now construct an algorithm as claimed in (\ref{e:B}) as follows: 
$\algB$ will sample $i\in\{1,\ldots,k\}$ and then $r^{i-1}$ at random. Then $\algB$ calls $\algP_i$ with all possible $a^{i-1}\in\{0,1\}^{i-1}$.
We note that with probability $\delta/2k$ (we lose a factor $k$ for the guess of $i$, and $\delta/2$ is the probability of sampling a good $r^{i-1}$) the predictor $\algP_i$  will satisfy 
(\ref{e:Pgood}).

If $x$ is not found, $\algB$ repeats the above process, but stops if $x$ is not found after $2k/\delta$ iterations. The success probability of $\algB$ is $\approx (1-1/e)0.8>0.5$ as claimed, the overall running time 
we get is $O(2^{2k}\operatorname{poly}(m,n))$.
\qed
\end{proof}

\section{Proof of Theorem~\ref{T:m2}}
\label{A:Pmain}
\newcommand{\out}[1]{}

\noindent 
It's sufficient to prove the theorem for $\gamma=0$, the case $\gamma>0$ then follows 
directly by definition of unpredictability entropy. 
Suppose for the sake of contradiction that (\ref{e:important}) does not hold. 
That is, $\metricentrdr{t,\epsilon}(X|Z) < k$, which 
means that there exists a distinguisher $\cD:\nstrings\times\mstrings \rightarrow [0,1]$ of size $t$ that satisfies
\begin{equation}\label{eq:distinugishing}
 \E \cD(X,Z) - \E\cD(Y,Z) \geqslant \epsilon \quad \forall (Y,Z):\ \avminentr(Y|Z) \geqslant k.
\end{equation}
We will show how to construct an efficient algorithm that given $Z$ uses $\cD$ to predict $X$ with probability at least $2^{-k}$, contradicting (\ref{e:imp0}). The core of the algorithm is the procedure \ref{alg:predictor} described below. 

\begin{function}
\SetKwInOut{Input}{Input}\SetKwInOut{Output}{Output}
\SetKwFunction{BernoulliDistribution}{BernoulliDistribution}
\SetKwFunction{Predictor}{Predictor}
\DontPrintSemicolon
\Input{$z\leftarrow Z$, $[0,2]$-valued distinguisher $\cD'$}
\Output{$x\in\{0,1\}^{n}$}  
$b \gets 1$, $i\gets 1$ \;
\While{$b\not = 0$ {\bf and} $i < \ell$}{
$x\gets \left\{0,1\right\}^n$  \;
$b \gets $ \BernoulliDistribution{$\cD'(x,z)/2$}  \tcc*{outputs $1$ w.p. $\cD'(x,z)/2$} 
\eIf{$b=0$}{$i\gets i+1$ }{ \Return{x} }} 
\Return{$\bot$} \;
\caption{ Predictor($z,\cD',\ell$) }\label{alg:predictor} 
\end{function} 

$\mathrm{Predictor}(Z,\cD,\ell)$ samples an element $x\in \nstrings$ according to some probability distribution. This distribution captures the following intuition: as the advantage $\E \cD(X,Z) - \E\cD(Y,Z)$ is positive (as assumed in \eqref{eq:distinugishing}), we know that $x$ being the correct guess for $X$ is positively correlated with the value $\cD(x,Z)$. 
The probability that $\mathrm{Predictor}(Z,\cD,\ell)$ returns some 
particular value $x$ as guess for $X$ will be linear in $\cD(x,Z)$. 

$\mathrm{Predictor}(Z,\cD,\ell)$ may also output $\bot$, which means it failed to 
sample an $x$ according to this distribution. The probability of outputting $\bot$ goes 
exponentially fast to $0$ as $\ell$ grows.
\paragraph{A toy example: predicting $X$ when $Z$ is empty and $\cD$ is boolean.}
Suppose that  $\E \cD(X) - \E\cD(Y) \geqslant \epsilon$ for all $Y$ such that $\minentr(Y) \geqslant k$. And assume that $\cD(.)$ is boolean (not real valued as in our theorem). 
Then $\mathrm{Predictor}(\emptyset,\cD,\ell)$ will output a 
guess for $X$ that (if it's not $\bot$) is a random value $x$ satisfying $\cD(x)=1$. The probability that this guess for $X$ is correct equals $\E\cD(X)/|D|$ where $|D| = \sum_{x} \cD(x)$. Consider now the distribution $Y$ of min-entropy $k$ that maximizes $\E\cD(Y)$. 
We can assume that $Y$ is flat and supported on those $2^{k}$ elements $x$ for which the value $\cD(x)$ is the biggest possible. Observe that since $\E \cD(X) - \E\cD(Y) > 0$, we have $\E\cD(Y) < 1$ and since $\cD$ is boolean, the support of $Y$ contains all the elements $x$ satisfying $\cD(x)=1$. Therefore we obtain $\E\cD(Y) = 2^{-k}|D|$. Now we can estimate the predicting probability from below as follows:
\begin{align*}
 \Pr[ \textrm{$X$ is predicted  correctly} ] = \frac{\E\cD(X)}{|D|}  \geqslant \frac{\E\cD(Y)+\epsilon}{|D|} = 2^{-k} + \frac{\epsilon}{|D|}
\end{align*}
The above probability holds for $\ell=\infty$, i.e., when predictor never outputs $\bot$. 
For efficiency reasons, we must use a finite, and not too big $\ell$. The predictor will output $\bot$ with probability 
$\left(1-2^{-n}|D|\right)^{\ell}$ and thus
\begin{align*}
 \Pr[\textrm{we predcit $X$ in time $\mathcal{O}(\ell\cdot\mathrm{time}(D))$}] = 
 \left( 2^{-k} + \frac{\epsilon}{|D|} \right)\left(1-\left(1-2^{-n}|D|\right)^{\ell}\right)
\end{align*}
With a little bit of effort one can prove that setting $\ell = 1+ 2^{n-k}/\epsilon\approx 2^{\Delta}/\epsilon$ yields the success probability $2^{-k}$ independently of $|D|$.
\paragraph{Proof in general case - important issues}
Unfortunately, what we have proven above cannot be generalized easily to the 
case considered in the theorem, there are two obstacles. 
First, in the theorem we consider a conditional distribution $X|Z$ (i.e., the conditional part $Z$ is not empty as above).  
Unfortunately we cannot simply 
make the above argument separately for all possible choices 
$Z=z$ of the conditional part, as we  cannot guarantee that the conditional advantages $\epsilon(z)=\E\cD(X|Z=z,z)-\E\cD(Y|Z=z,z)$ are \emph{all} positive; we only know that their average $\epsilon = \E_{z\leftarrow Z}\epsilon(z)$ is positive. Second, 
so far we assumed that $\cD$ is boolean. This would only prove the theorem 
where the derived entropy in (\ref{e:important}) is against deterministic \emph{boolean} 
distinguishers, and this is not enough to conclude that we have the same amount of HILL entropy as discussed in Section~\ref{S:MH}. 
\paragraph{Actual proof - preliminaries}
\noindent 
For real-valued distinguishers in the conditional case, just invoking $\textsc{Predictor}(Z,\algD,\ell)$ on a $\algD$ satisfying (\ref{eq:distinugishing}), will not give a 
predictor for $X$ with advantage $>2^{-k}$ in general. Instead, we first have to transform $\algD$ into a new distingusiher $\algD'$ that has the same distinguishing advantage, and
for which we can prove that the predictor will work.

The way in which we modify $\cD$ depends on the distribution $Y|Z$ that minimizes the left-hand side of \eqref{eq:distinugishing}. This distribution can be characterized as follows:
\begin{lemma}[\cite{Skorski15}]\label{thm:characterizing_maximizer}
\noindent Given $\cD:\nstrings\times\mstrings \rightarrow [0,1]$ consider the following optimization problem
\begin{equation}\label{eq:maximizing_expectation}
\begin{array}{rl}
 \max\limits_{Y|Z} & \mathbb{E} \cD(Y,Z) \\
 \textup{s.t.} & \avminentr(Y|Z) \geqslant k
\end{array}
\end{equation}
The distribution $Y|Z=Y^{*}|Z$ satisfying $\avminentr(Y^{*}|Z) = k$ is optimal for  (\ref{eq:maximizing_expectation}) if and only if there exist real numbers $t(z)$ and a number $\lambda \geqslant 0$ such that for every $z$
\begin{enumerate}[(a)]
\item $\sum_{x} \max(\cD(x,z) - t(z),0) = \lambda$ 
\item If $0 < \mathbf{P}_{Y^{*}|Z=z}(x) < \max_{x'}\mathbf{P}_{Y^{*}|Z=z}(x')$ then $\cD(x,z) =t(z)$.
\item If $\mathbf{P}_{Y^{*}|Z=z}(x) = 0$ then $\cD(x,z) \leqslant t(z)$ 
\item If $\mathbf{P}_{Y^{*}|Z=z}(x) = \max_{x'}\mathbf{P}_{Y^{*}|Z=z}(x')$ then $\cD(x,z) \geqslant t(z)$
\end{enumerate}
\end{lemma}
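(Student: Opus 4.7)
The plan is to recognize this as a convex optimization problem in disguise and to apply standard Karush--Kuhn--Tucker (KKT) / linear programming duality. Viewing the conditional distribution $P_{Y|Z=z}(x)$ as the decision variable (the marginal $P_Z$ is fixed), the objective $\E\cD(Y,Z) = \sum_{x,z} P_Z(z)\,P_{Y|Z=z}(x)\,\cD(x,z)$ is linear in $P_{Y|Z}$, while the constraint $\avminentr(Y|Z) \geq k$, i.e.\ $\sum_z P_Z(z)\max_x P_{Y|Z=z}(x) \leq 2^{-k}$, is convex. Introducing an auxiliary variable $M(z)$ together with the constraints $P_{Y|Z=z}(x) \leq M(z)$ and $\sum_z P_Z(z)M(z) \leq 2^{-k}$ turns the whole program into a finite linear program, which makes both directions of KKT automatic via LP strong duality.

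I would first prove the sufficiency direction (``if'') by an elementary direct argument. Assume $Y^{*}|Z$ satisfies (a)--(d) with some $t(z)$ and $\lambda \geq 0$. For any feasible $Y|Z$, using $\sum_x P_{Y^{*}|Z=z}(x) = \sum_x P_{Y|Z=z}(x) = 1$ to subtract $t(z)$, one writes
\[
\E \cD(Y^{*},Z) - \E \cD(Y,Z) = \sum_z P_Z(z)\sum_x \bigl(P_{Y^{*}|Z=z}(x) - P_{Y|Z=z}(x)\bigr)\bigl(\cD(x,z) - t(z)\bigr).
\]
The contrapositive of (b), combined with (c) and (d), forces $\cD(x,z) > t(z) \Rightarrow P_{Y^{*}|Z=z}(x) = M^{*}(z)$ and $\cD(x,z) < t(z) \Rightarrow P_{Y^{*}|Z=z}(x) = 0$, where $M^{*}(z) := \max_{x'} P_{Y^{*}|Z=z}(x')$. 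Combined with (a), the $Y^{*}$ contribution evaluates exactly to $\lambda\,M^{*}(z)$ per $z$, while the $Y$ contribution is bounded above by $\lambda\,\max_{x'} P_{Y|Z=z}(x')$. Averaging over $z$, the difference is at least $\lambda\bigl(2^{-k} - \sum_z P_Z(z)\max_{x'}P_{Y|Z=z}(x')\bigr) \geq 0$, where we used $\avminentr(Y^{*}|Z) = k$ on one side and feasibility of $Y$ on the other.

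For the necessity direction (``only if''), I would write down the Lagrangian of the LP reformulation with non-negative multipliers $\alpha(x,z)$ for $P_{Y|Z=z}(x) \leq M(z)$, $\beta(x,z)$ for non-negativity, $\lambda \geq 0$ for the min-entropy budget, and unconstrained $\mu(z)$ for normalization. Stationarity with respect to $P_{Y|Z=z}(x)$ gives $\alpha(x,z) - \beta(x,z) = P_Z(z)\cD(x,z) - \mu(z)$, while stationarity with respect to $M(z)$ gives $\sum_x \alpha(x,z) = \lambda\, P_Z(z)$. Setting $t(z) := \mu(z)/P_Z(z)$ and using complementary slackness, I deduce $\alpha(x,z) = P_Z(z)\max(\cD(x,z) - t(z),\,0)$, so summing over $x$ yields condition (a). Condition (b) follows since $0 < P_{Y^{*}|Z=z}(x) < M^{*}(z)$ forces both $\alpha(x,z) = 0$ (by slackness on $\leq M$) and $\beta(x,z) = 0$ (by slackness on $\geq 0$), whence $\cD(x,z) = t(z)$; conditions (c) and (d) are read off similarly from the complementary slackness pairs at boundary values.

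The main obstacle is ensuring constraint qualification so that KKT is actually \emph{necessary} at the optimum. I would circumvent this by noting that after the $M(z)$ substitution the problem is a finite linear program, so strong duality holds unconditionally and complementary slackness at any optimum produces the required $t(z)$ and $\lambda$ directly; alternatively, Slater's condition can be verified by exhibiting the uniform conditional distribution $P_{Y|Z=z}(x) = 2^{-n}$, which strictly satisfies the min-entropy budget whenever $k < n$. Minor technicalities---e.g.\ $M^{*}(z) = 0$ occurring only on a $P_Z$-null set of $z$, or the corner case $k = n$ where $Y^{*}$ is forced uniform and all four conditions hold trivially with $\lambda = 0$ and any $t(z) \leq \min_x \cD(x,z)$---can be disposed of separately.
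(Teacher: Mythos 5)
Your approach is correct and for the necessity (``only if'') direction is essentially the paper's: both reformulate the min-entropy constraint via an auxiliary per-$z$ variable (your $M(z)$, the paper's $a_z$, differing only by the factor $P_Z(z)$ since the paper works with joint probabilities $P_{x,z}$ rather than conditionals) to produce a finite LP, and then read off $t(z)$ and $\lambda$ from the stationarity and complementary-slackness conditions. Where you genuinely differ is the sufficiency (``if'') direction: the paper handles it by reconstructing explicit KKT multipliers $\lambda^1(x,z)=\max(t(z)-\cD(x,z),0)$, $\lambda^3(x,z)=\max(\cD(x,z)-t(z),0)$, $\lambda^4=\lambda$ and then re-invoking the KKT characterization, whereas you give a direct, self-contained verification: subtract $t(z)$ from $\cD(\cdot,z)$ using normalization, observe that (b)--(d) force $P_{Y^*|Z=z}$ to be the ``flat-top'' distribution so its contribution per $z$ is exactly $\lambda\cdot\max_{x'}P_{Y^*|Z=z}(x')$, bound the competitor's contribution by $\lambda\cdot\max_{x'}P_{Y|Z=z}(x')$, and average. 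The direct argument is more elementary and does not require trusting the converse of KKT for this particular LP, which is a small but real gain in transparency; the paper's route is more uniform in that it dispatches both directions from a single KKT claim. Your explicit attention to constraint qualification (via LP strong duality or, alternatively, Slater with the uniform conditional) is also sound and makes the necessity direction airtight, whereas the paper leaves this implicit in the phrase ``straightforward application of KKT conditions.''
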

\begin{proof}
The proof is a straightforward application of the Kuhn-Tucker conditions given in Appendix. 
\qed\end{proof}
\begin{remark}
The characterization can be illustrated in an easy and elegant way. First, it says that the area under the graph of $\cD(x,z)$ and above the threshold $t(z)$ is the same, no matter what $z$ is (see Figure \ref{fig:1}).

\begin{figure}[!ht]
\begin{minipage}{0.5\linewidth}
\centering
\includegraphics[scale=1]{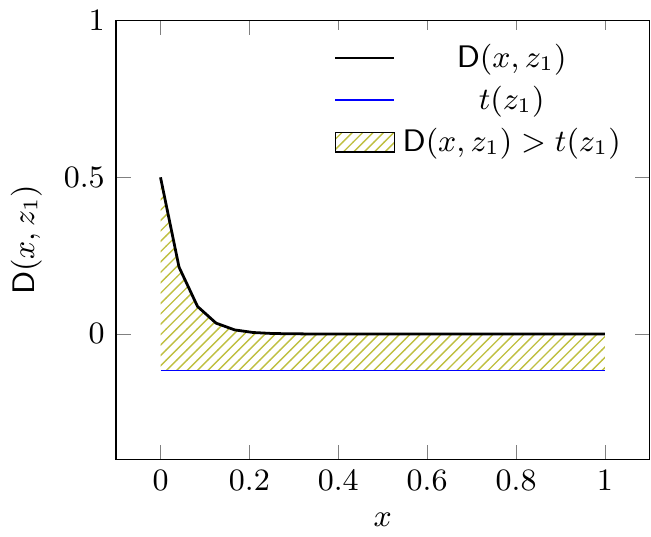} 
\end{minipage}
\begin{minipage}{0.5\linewidth}
\centering
\includegraphics[scale=1]{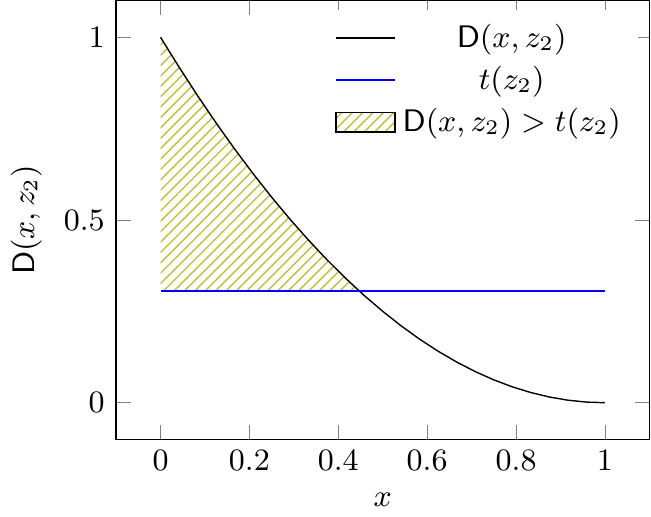} 
\end{minipage}
\caption{For every $z$, the (green) area under $\cD(\cdot,z)$ and above $t(z)$ equals $\lambda$}
\label{fig:1}
\end{figure}

\noindent
Second, for every $z$ the distribution $Y^{*}|Z=z$ is flat over the set $\left\{x:\ \cD(x,z) > t(z)\right\}$ and vanishes for $x$ satisfying $\cD(x,z) < t(z)$, see Fig. \ref{fig:2}.
\begin{figure}[!ht]
\centering
\includegraphics[scale=1]{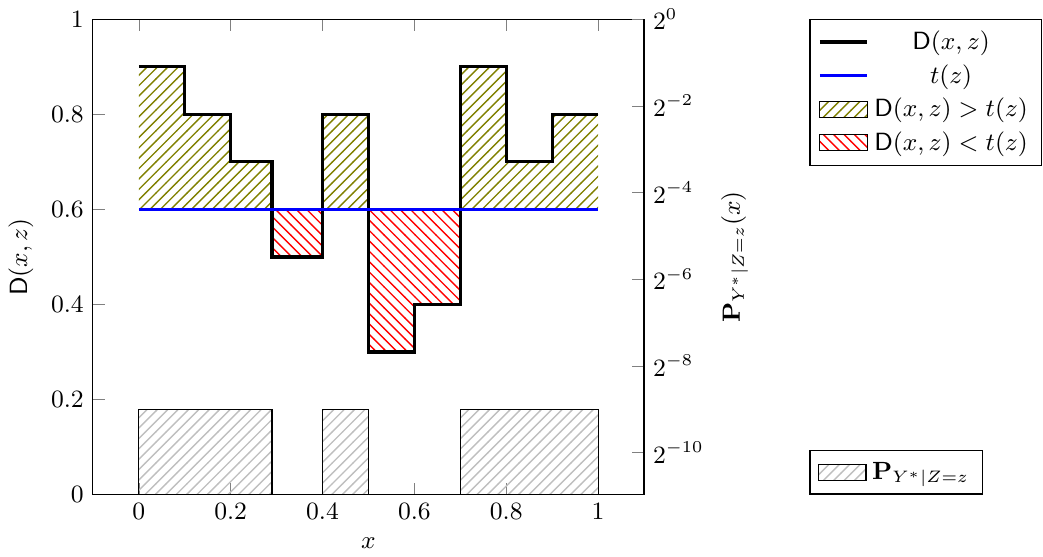} 
\caption{Relation between distinguisher $\cD(x,z)$, threshold $t(z)$ and distribution $Y^{*}|Z=z$.}
\label{fig:2}
\end{figure}
\end{remark}

\noindent
Note that because of ``freedom" in defining the distribution on elements $x$ satisfying $ \cD(x,z)=t(z)$ (\ref{thm:characterizing_maximizer}, point (b)), there could be many distributions $Y^{*}|Z$ corresponding to fixed numbers $\lambda$ and $t(z)$ that satisfy the characterization above, and this way are optimal to \eqref{eq:maximizing_expectation} with $k = \avminentr(Y^{*}|Z)$. For the sake of completeness we characterize bellow the all possible values of $k$ that match to $\lambda$ and $t(z)$. We note that this fact might be used to modify our nonuniform guessing algorithm into a uniform one.
\begin{corollary}\label{corollary:entropy_range}
Let $\cD:\nstrings\times\mstrings\rightarrow [0,1]$ and $\lambda \in (0,1)$.  Let $t(z)=t(\lambda,z)$ be the unique numbers that satisfy the condition (a) in Lemma \ref{thm:characterizing_maximizer}. Define
\begin{align}\label{eq:entropy_depending_on_lambda}
 k(\lambda) & = n -\log\left( \E_{z\leftarrow Z}\left[ 1/\mathbf{P}(\cD(U,z)\geqslant t(z))\right] \right), 
\end{align}
which is a non-decreasing right continuous function of $\lambda$. Let $k^{-}(\lambda) = \lim_{\lambda'\rightarrow \lambda^{-}}k(\lambda')$ and $k^{+}(\lambda)=\lim_{\lambda'\rightarrow \lambda^{+}}k(\lambda')=k(\lambda)$ be the one-sided limits.
Then for every $Y^{*}|Z$ of min-entropy $k=\avminentr(Y^{*}|Z)$ fulfilling (b),(c) and (d) we have $k^{-}\leqslant k\leqslant k^{+}$. Conversely, if $k$ satisfies $k^{-}\leqslant k\leqslant k^{+}$ then there exists a distribution $Y^{*}|Z$ fulfilling (b),(c) and (d) such that $\avminentr(Y^{*}|Z)=k$. 
\end{corollary}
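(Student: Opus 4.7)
The plan is to read off the combinatorial structure that conditions (b), (c), (d) of Lemma \ref{thm:characterizing_maximizer} impose on $Y^{*}|Z$, and then translate that structure directly into the admissible range of the conditional min-entropy. Fix $\lambda$ and the thresholds $t(z)=t(\lambda,z)$, and for each $z$ partition $\nstrings$ into $A_z=\{x:\cD(x,z)>t(z)\}$, $B_z=\{x:\cD(x,z)=t(z)\}$, and $C_z=\{x:\cD(x,z)<t(z)\}$. Combining (b) and (c) forces $\mathbf{P}_{Y^{*}|Z=z}$ to vanish on $C_z$; combining (b) and (d) forces every element of $A_z$ to carry the maximum mass $p(z):=\max_{x'}\mathbf{P}_{Y^{*}|Z=z}(x')$; on $B_z$ the individual masses are free, subject only to each lying in $[0,p(z)]$. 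A short normalization argument then shows that the set of admissible values of $p(z)$ is exactly the closed interval
\[
  [p_{\min}(z),p_{\max}(z)] \;=\; \left[\frac{2^{-n}}{\Pr[\cD(U,z)\geqslant t(z)]},\;\frac{2^{-n}}{\Pr[\cD(U,z)>t(z)]}\right],
\]
and that every value in this interval is realized by suitably choosing how many elements of $B_z$ sit at the maximal level $p(z)$ and smearing the leftover mass across the remaining $B_z$-elements.

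Since $2^{-\avminentr(Y^{*}|Z)}=\E_{z}[p(z)]$, the previous step already yields $\E_{z}[p_{\min}(z)]\leqslant 2^{-\avminentr(Y^{*}|Z)}\leqslant \E_{z}[p_{\max}(z)]$. I will then identify $\E_{z}[p_{\min}(z)]$ with $2^{-k^{+}(\lambda)}$ and $\E_{z}[p_{\max}(z)]$ with $2^{-k^{-}(\lambda)}$ by a direct limit computation. The map $\lambda\mapsto t(\lambda,z)$ is continuous and strictly decreasing, being the inverse of the continuous strictly decreasing function $t\mapsto\sum_{x}\max(\cD(x,z)-t,0)$; moreover $\cD(\cdot,z)$ takes only finitely many values, so as $\lambda'\downarrow\lambda$ the threshold $t(\lambda',z)$ drops just below $t(\lambda,z)$ without crossing any value of $\cD(\cdot,z)$, whence $\Pr[\cD(U,z)\geqslant t(\lambda',z)]=\Pr[\cD(U,z)\geqslant t(\lambda,z)]$ for all sufficiently small $\lambda'-\lambda>0$. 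This simultaneously gives the monotonicity and the right-continuity of $k(\lambda)$, and in particular $k^{+}(\lambda)=k(\lambda)$. The analogous computation for $\lambda'\uparrow\lambda$ has $t(\lambda',z)$ approaching $t(\lambda,z)$ strictly from above, which converts $\geqslant$ into the strict inequality $>$ in the limit and therefore identifies $\E_{z}[p_{\max}(z)]=2^{-k^{-}(\lambda)}$.

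Putting the two steps together gives $k^{-}(\lambda)\leqslant\avminentr(Y^{*}|Z)\leqslant k^{+}(\lambda)$ for every $Y^{*}|Z$ satisfying (b), (c), (d). For the converse direction I would use a one-parameter interpolation: for $\theta\in[0,1]$ set $p(z):=(1-\theta)p_{\min}(z)+\theta p_{\max}(z)$, realize the corresponding fiberwise distribution via the construction from the first paragraph, and observe that the resulting $\E_{z}[p(z)]$ sweeps continuously from $2^{-k^{+}(\lambda)}$ to $2^{-k^{-}(\lambda)}$, so the intermediate value theorem supplies a $\theta$ matching any prescribed $2^{-k}$ in that interval. The one delicate point I expect is the careful bookkeeping for the one-sided limits at those exceptional $\lambda$'s where $t(\lambda,z)$ actually hits a value $\cD(x,z)$ of positive uniform mass, but because $\cD(\cdot,z)$ has finite range this reduces to a finite case analysis per $z$, and the average over $z$ is handled by dominated convergence.
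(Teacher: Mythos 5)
Your proposal is correct and, at its core, uses the same structural observation as the paper: conditions (b),(c),(d) force each conditional $Y^{*}|Z=z$ to be zero on $C_z=\{\cD<t(z)\}$, pinned at the maximal level $p(z)$ on $A_z=\{\cD>t(z)\}$, and free in $[0,p(z)]$ on $B_z=\{\cD=t(z)\}$, so that $p(z)$ ranges over the closed interval $\bigl[\frac{2^{-n}}{\Pr[\cD(U,z)\geqslant t(z)]},\frac{2^{-n}}{\Pr[\cD(U,z)>t(z)]}\bigr]$. Where you diverge is in how you convert that per-$z$ interval into the global range for $\avminentr(Y^{*}|Z)$. The paper does it by a mass-shifting argument: it constructs, for any non-extremal pair $(z_1,z_2)$, perturbed conditionals $f^{-\delta}_{z_1},f^{\delta}_{z_2}$ that preserve the characterization and the average, pushing one of the two coordinates to its endpoint; iterating this reaches the extremal configurations. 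You instead observe directly that $2^{-\avminentr(Y^{*}|Z)}=\E_z[p(z)]$ factors through the box $\prod_z [p_{\min}(z),p_{\max}(z)]$, so the achievable set of values is exactly $\bigl[\E_z[p_{\min}(z)],\E_z[p_{\max}(z)]\bigr]$ by linearity and connectedness, with a one-parameter convex interpolation giving the converse via the intermediate value theorem. Both routes then need the same finite-range limit computation to identify $\E_z[p_{\min}(z)]$ with $2^{-k^{+}(\lambda)}$ and $\E_z[p_{\max}(z)]$ with $2^{-k^{-}(\lambda)}$; you state this explicitly, while the paper leaves it implicit in the definition of $k(\lambda)$. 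Your version is more explicit and arguably tidier (and your ``delicate point'' about atoms of $\cD(\cdot,z)$ is in fact already handled by your finite-range observation plus the finiteness of $\mstrings$, so dominated convergence is unnecessary). One minor caution: when you invoke linearity of $\E_z[\cdot]$ to sweep the interval you should note that $\avminentr$ is not linear in $p(z)$ but $2^{-\avminentr}$ is, which is exactly what you use; this is correct but worth flagging in a clean write-up.
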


\paragraph{Predicting given the thresholds $t(z)$.}
\noindent We use the numbers $t(z)$ to modify $\cD$ and  then we call the procedure\ref{alg:predictor} on the modified distinguisher. Lemma \ref{thm:unefficient_predictor} below shows that we could efficiently predict $X$ from $Z$, assuming we knew the numbers $t(z)$ for all $z$ in the support of $Z$ (later, we'll show how to efficiently approximate them)
\begin{lemma}\label{thm:unefficient_predictor}
Let $Y^{*}|Z$ be the distribution satisfying $\avminentr(Y^{*}|Z) = k$ and maximizing $\E\cD(Y,Z)$ over $\avminentr(Y|Z)\geqslant k$, where $k < n$ and $\cD$ satisfies \eqref{eq:distinugishing}. Let $t(z)$ be as in Lemma \ref{thm:characterizing_maximizer}.
Define 
\begin{equation}\label{eq:modified_distinguisher}
\cD'(x,z) = \max(\cD(x,z)-t(z),0)
\end{equation}
and set $\ell = 2\cdot 2^{n-k}\epsilon^{-1}$ in the algorithm \textsc{\ref{alg:predictor}}. Then we have
\begin{equation}\label{eq:predictor_succeeds}
 \Pr\left( \textsc{Predictor}(Z,\cD',\ell) = X\right) \geqslant 2^{-k}\left(1+2^{k-n}\epsilon\right)
\end{equation}
\end{lemma}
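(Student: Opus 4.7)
The proof has a clean three-step skeleton: (i) compute the output distribution of $\textsc{Predictor}(Z,\cD',\ell)$ exactly; (ii) use the characterization of $Y^*|Z$ from Lemma~\ref{thm:characterizing_maximizer} to translate the distinguishing advantage of $\cD$ into a lower bound on $\E\cD'(X,Z)$; (iii) combine with the choice of $\ell$ and verify the target inequality. Throughout we use crucially that $\lambda$ (from point (a)) does \emph{not} depend on $z$, which lets us decouple the $z$-average from the geometric failure probability of the rejection-sampling loop.

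\textbf{Step 1 (predictor's distribution).} Each iteration of \textsc{Predictor} outputs the sampled $x$ with probability $\cD'(x,z)/2$, so the per-iteration acceptance probability equals $q:=\sum_x 2^{-n}\cD'(x,z)/2 = \lambda/2^{n+1}$, which is constant in $z$. Summing the geometric series gives, for every $x$,
\[
\Pr\bigl[\textsc{Predictor}(z,\cD',\ell)=x\bigr] \;=\; \frac{\cD'(x,z)}{\lambda}\Bigl(1-\bigl(1-\lambda/2^{n+1}\bigr)^{\ell}\Bigr).
\]
Averaging over $(X,Z)$ yields
$\Pr[\textsc{Predictor}(Z,\cD',\ell)=X] = \frac{\E\cD'(X,Z)}{\lambda}\bigl(1-(1-\lambda/2^{n+1})^{\ell}\bigr)$.

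\textbf{Step 2 (identifying $\E\cD'(Y^*,Z)$ and bounding $\E\cD'(X,Z)$).} For each $z$, by items (b)--(d) of Lemma~\ref{thm:characterizing_maximizer}, the support of $Y^*|Z=z$ lies in $\{x:\cD(x,z)\geq t(z)\}$, is flat on $\{x:\cD(x,z)>t(z)\}$ with mass $M(z):=\max_x \mathbf{P}_{Y^*|Z=z}(x)$, and whatever partial mass is assigned to boundary atoms (where $\cD=t$) contributes $0$ to $\cD'$. Hence
\[
\E\cD'(Y^*,Z)=\E_{z}M(z)\sum_{x}\cD'(x,z)=\lambda\cdot\E_z M(z)=\lambda\cdot 2^{-k},
\]
by definition of average min-entropy. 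Splitting $\cD=\cD'+\min(\cD,t)$ and observing that on the support of $Y^*|Z=z$ one has $\min(\cD,t)=t(z)$, we also get $\E\cD(Y^*,Z)=\lambda\cdot 2^{-k}+\E t(Z)$. Finally, $\cD'(x,z)\geq \cD(x,z)-t(z)$ pointwise, so the distinguishing hypothesis $\E\cD(X,Z)\geq\E\cD(Y^*,Z)+\epsilon$ yields
\[
\E\cD'(X,Z)\;\geq\;\E\cD(X,Z)-\E t(Z)\;\geq\;\lambda\cdot 2^{-k}+\epsilon.
\]

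\textbf{Step 3 (plugging in $\ell=2\cdot 2^{n-k}/\epsilon$).} Setting $s:=\lambda\cdot 2^{-k}/\epsilon$, the product $\ell\cdot\lambda/2^{n+1}=s$, and combining Steps~1 and~2 reduces the desired bound $\Pr\geq 2^{-k}(1+2^{k-n}\epsilon)$ to the elementary inequality
\[
\bigl(1+1/s\bigr)\Bigl(1-\bigl(1-\lambda/2^{n+1}\bigr)^{\ell}\Bigr)\;\geq\;1+2^{k-n}\epsilon,
\]
which holds for every $s>0$ by standard estimates $1-(1-a)^\ell\geq 1-e^{-a\ell}$ in the regime $s\gtrsim 1$ (where the factor $1-e^{-s}$ is bounded away from $0$) and the sharper Bernoulli expansion $1-(1-a)^\ell\geq \ell a(1-\ell a/2)$ in the regime $s\lesssim 1$ (where $\ell a$ is the dominant linear term).

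\textbf{Main obstacle.} The delicate point is Step~2: the characterization of $Y^*|Z$ allows boundary atoms on $\{\cD=t\}$ with arbitrary partial mass, so the identity $\E\cD'(Y^*,Z)=\lambda\cdot 2^{-k}$ only works because $\cD'$ vanishes on exactly those atoms, and the maximum-mass identity $\sum_x \cD'(x,z)=\lambda$ is independent of $z$. The per-$z$ advantage $\E[\cD(X,Z)|Z=z]-\E[\cD(Y^*,Z)|Z=z]$ need not be positive, so one cannot argue fiberwise and must work with the global expectations throughout; this is precisely what the careful choice $\cD'=\max(\cD-t,0)$ enables.
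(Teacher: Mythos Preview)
Your proposal follows essentially the same route as the paper: Step~1 corresponds to the paper's Claim~\ref{claim:guessingprobability} (combined with the observation that $\E\cD'(U,z)=2^{-n}\lambda$ is independent of $z$, the paper's Claim~\ref{claim:uniform_expectations_are_equal}); Step~2 packages together Claims~\ref{claim:replacing_distinguisher} and~\ref{claim:conditional_expectations}; and Step~3 is the one-variable minimization the paper isolates as Claim~\ref{claim:probability_expression_analysis}. Your identification of the ``main obstacle'' (that one cannot argue fiberwise because the conditional advantages $\epsilon(z)$ need not be positive, and that the constancy of $\lambda$ in $z$ is what rescues the argument) is exactly the point the paper stresses.

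The one place your write-up is noticeably weaker than the paper is Step~3. You reduce to the inequality $(1+1/s)\bigl(1-(1-\lambda/2^{n+1})^{\ell}\bigr)\ge 1+2^{k-n}\epsilon$ and then wave at ``standard estimates''. The two estimates you name do not by themselves close the gap: with $u=\lambda/2^{n+1}$ and $s=\ell u$, the bound $1-(1-u)^{\ell}\ge 1-e^{-s}$ gives $(1+1/s)(1-e^{-s})$, which for small $s$ is $\approx 1-s/2<1$; your Bernoulli alternative $1-(1-u)^\ell\ge s(1-s/2)$ gives $(s+1)(1-s/2)=1+s/2-s^2/2$, which again tends to $1$ as $s\to 0$ and does not dominate $1+2^{k-n}\epsilon$ without a further lower bound on $s$. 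The paper handles this by proving directly (via a first-derivative computation) that $h(u)=(1-(1-u)^\ell)(1+a/u)$ with $a=2^{k-n-1}\epsilon$ attains its minimum on $[0,1]$ at $u=1$, once $\ell\ge 1+a^{-1}$; this is Claim~\ref{claim:probability_expression_analysis}. So your outline is right, but Step~3 needs a genuine argument (or a citation of that claim) rather than the two inequalities you list.
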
 
\begin{proof}
\noindent We start by calculating  the probability on the left-hand side of\eqref{eq:predictor_succeeds}
\begin{Claim}\label{claim:guessingprobability}
For any\footnote{We will only use the claim for the distinguisher $\cD'$ as constructed above, but the claim holds in general.} $\cD'$, the algorithm \textsc{\ref{alg:predictor}} outputs $X$ given $Z=z$ with probability
\begin{align}\label{eq:guessingprobability}
\Pr_{X,Z}\left( \left.\textsc{Predictor}(Z,\cD',\ell) =X \right| Z=z \right) & = 2^{-n-1} g\left(\frac{\E\cD'(U,z)}{2}\right)\cdot \E\cD'(X|Z=z,z)
\end{align}
where $U$ is uniform over $\{0,1\}^{n}$ and $g$ is defined by $g(d) = \frac{1-(1-d)^{\ell}}{d}$ (so $g(d)\approx 1/d$ for large $\ell$)
\end{Claim}\label{claim:worstdistribution}
\begin{proof}[of Claim]
It is easy to observe that
\begin{align}
\Pr[\left.\textsc{Predictor}(z,\cD',\ell)= x \right|\textsc{Predictor}(z,\cD',\ell)\not=\bot] & =\frac{\cD'(x,z)}{\sum\limits_{x} \cD'(x,z)}
\end{align}
In turn, for every round $i=1,\ldots,\ell$ of the execution, the probability that \textsc{Predictor} stops and outputs $x'$ is equal to $\Pr[U=x']\cD'(x',z)/2=2^{-n-1}\cD'(x',z)$, the probability 
that it outputs anything (and thus leaves the while loop) is thus
$\sum_{x'} \Pr[U=x']\cdot \left(1-\frac{\cD'(x',z)}{2}\right)=1-\frac{\E\cD'(U,z)}{2}$. So the probability of not leaving the while loop for $\ell$ rounds (in this case the output is $\bot$) is
\begin{align}
 \Pr[\textsc{Predictor}(z,\cD',\ell)= \bot] = 1-\left(1-\frac{\E\cD'(U,z)}{2}\right)^{\ell}
\end{align}
Combining the last two formulas we obtain
\begin{align}
\Pr[\textsc{Predictor}(z,\cD')= x] & = 2^{-n-1} g( \E\cD'(U,z)/2)\cdot \cD'(x,z) 
\end{align}
Hence
\begin{align}
 \Pr[\textsc{Predictor}(z,\cD') = X|Z=z] & = \sum\limits_{x}\Pr[\textsc{Predictor}(z,\cD')= x,X=x|Z=z] \nonumber \\
 & = \sum\limits_{x}\Pr[\textsc{Predictor}(z,\cD')= x]\Pr[X=x|Z=z] \nonumber \\
 & = 2^{-n-1} g(\E\cD'(U,z)/2) \sum\limits_{x} \cD'(x,z)\Pr[X=x|Z=z] \nonumber \\
 & = 2^{-n-1} g(\E\cD'(U,z)/2) \E\cD'(X|Z=z,z)
\end{align}
and the claim follows.
\qed\end{proof}
Now we can see why we cannot apply the algorithm \textsc{\ref{alg:predictor}} using the distinguisher $\cD$ satisfying only \eqref{eq:distinugishing} directly. According to the last formula, the success probability would be an averaged sum of products $g(\E\cD(U,z))\cdot \E\cD(X|Z=z,z)$ over $z$. We know the average of the second factors of these products, but in general cannot compare the values of $\E\cD(U,z)$ for different $z$'s. The crucial observation is that the distinguisher $\cD'$ we defined satisfies the same inequality \eqref{eq:distinugishing} as $\cD$ (though, $\cD'$ has the range $[0,2]$ not $[0,1]$ as $\cD$). Moreover $\cD'$ has a special form which allows us to simplify expression \eqref{eq:predictor_succeeds}. The details are given in the next two claims
\begin{Claim}\label{claim:replacing_distinguisher}
We have $\E \cD'(X,Z) - \E\cD'(Y,Z) \geqslant \epsilon$ for all $Y|Z:\ \avminentr(Y|Z)\geqslant k$
\end{Claim}
\begin{proof}[of Claim]
We argue that (a): $\E \cD'(X,Z) - \E\cD'(Y^{*},Z)  \geqslant \E \cD(X,Z) - \E\cD(Y^{*},Z)$
and (b): $Y^{*}|Z$ maximizes $\cD'(Y,Z)$ over $\avminentr(Y|Z)\geqslant k$.  For the proof of (a), observe that by \eqref{eq:modified_distinguisher} we have $\cD'(x,z) \geqslant \cD(x,z) - t(z)$ for every $x$ and $z$. Hence $\E\cD'(X,Z) \geqslant \E\cD(X,Z) - t(z)$. Moreover, if $\cD(x,z)-t(z) < 0$ then Lemma \ref{thm:characterizing_maximizer} implies $\mathbf{P}_{Y^{*}|Z=z}(x) = 0$ and thus $\E \cD'(Y^{*}|Z=z,z) = \E\cD(Y^{*}|Z=z)-t(z)$. Hence, for all $z$ we have
\begin{equation*}
\E\cD'(X|Z=z) - \E \cD'(Y^{*}|Z=z,z) \geqslant \E\cD(X|Z=z,z)-\E\cD(Y^{*}|Z=z,z)
\end{equation*}
The proof of (a) follows now by taking the average over $z$. The proof of (b) follows by observing that $\cD'$ satisfies the  characterization in \eqref{thm:characterizing_maximizer} with $t(z) = 0$ for all $z$.
\qed\end{proof}

\begin{Claim}\label{claim:uniform_expectations_are_equal}
The exists a number $\lambda' \in (0,1)$ such that $\E\cD'(U,z) = \lambda'$ for every $z$. 
\end{Claim}
\begin{proof}
Lemma \ref{thm:characterizing_maximizer} implies $\sum_x \cD'(x,z) = \lambda$ for every $z$. We can define $\lambda' = 2^{-n}\lambda$ and then it remains to show $\lambda < 2^n$ and $\lambda >0$. Observe that the case
$t(z) < 0$ in Lemma \ref{thm:characterizing_maximizer} is possible if and only if $\mathbf{P}_{Y^{*}|Z=z}(x) = \max_{x'}\mathbf{P}_{Y^{*}|Z=z}(x')$ for all $x$, which means $\minentr(Y^{*}|Z=z) = n$. Since $k < n$, we have $t(z) \geqslant 0$ for at least one $z$ and then $\lambda = \sum_{x}\max(\cD(x,z)-t(z),0) \leqslant \sum_{x}\cD(x,z)$ which essentially means $\lambda \leqslant 2^{n}$. Lemma \ref{thm:characterizing_maximizer} guarantees that $\lambda\geqslant 0$ , therefore we need to show that $\lambda\not \in \{0,2^n\}$. Observe that if $\lambda = 0$ then the condition $\sum_x \cD'(x,z) = \lambda$ implies $\cD'(x,z)=0$ for all $x$ and $z$, contradicting to Claim \ref{claim:replacing_distinguisher} because $\epsilon > 0$. In turn, if $\lambda = 2^{n}$ then from Lemma \ref{thm:characterizing_maximizer} we get $\cD(\cdot,z) \equiv 1$ and $t(z) = 0$ for all $z$ such that $t(z)\geqslant 0$.  This is possible only if $\mathbf{P}_{Y^{*}|Z=z}(x) = \max_{x'}\mathbf{P}_{Y^{*}|Z=z}(x') $ for all $x$ which means $\minentr(Y^{*}|Z=z) = n$ if $t(z)\geqslant 0$. But then $\minentr(Y^{*}|Z=z) = n$ for all $z$ which contradicts $k<n$.
\qed\end{proof}
\noindent To calculate the success probability we need one more observation. The following claim shows that support of $\cD'$ is contained in the support of $Y^{*}$.
\begin{Claim}\label{claim:conditional_expectations}
For every $z$ we have
\begin{align}\label{eq:conditional_expectations}
   \E \cD'(Y^{*}|Z=z,z) =  \E\cD'(U,z) \cdot 2^{n}\max_{x'}\mathbf{P}_{Y^{*}|Z=z}(x').
\end{align}
\end{Claim}
\begin{proof}[of Claim]
By Lemma \ref{thm:characterizing_maximizer}, $\cD(x,z) > t(z)$ only if $\mathbf{P}_{Y^{*}|Z=z}(x) = \max_{x'}\mathbf{P}_{Y^{*}|Z=z}(x')$ therefore 
\begin{align*}
 \E \cD'(Y^{*}|Z=z,z) &=   \sum\limits_{x} \max(\cD(x,z)-t(z),0)\mathbf{P}_{Y^{*}|Z=z}(x) \nonumber \\
 & = \sum\limits_{x} \max(\cD(x,z)-t(z),0) \max_{x'}\mathbf{P}_{Y^{*}|Z=z}(x'),
\end{align*}
and the claim follows by the definition of $\cD'$.
\qed\end{proof}
\noindent Now we are ready to prove the main result. From Claim \ref{claim:guessingprobability}  and Claim \ref{claim:uniform_expectations_are_equal} we obtain
\begin{align}\label{eq:predicting_probability_estimate_1}
 \Pr\left( \textsc{Predictor}(Z,\cD',\ell) =X \right) & =  2^{-n-1} \E_{z\leftarrow Z}\left[ g(\lambda'/2)\cdot\cD'(X|Z=z,z)\right] \nonumber \\ 
 & = 2^{-n-1}g(\lambda'/2) \cdot \E\cD'(X,Z) 
\end{align}
Claim \ref{claim:replacing_distinguisher} applied to $Y=Y^{*}$ yields now the following estimate
\begin{align}\label{eq:predicting_probability_estimate_2}
 \Pr\left( \textsc{Predictor}(Z,\cD',\ell) =X \right) & \geqslant 2^{-n-1}g(\lambda'/2) \cdot\left( \E\cD'(Y^{*},Z)+\epsilon \right).
\end{align}
Observe that Claim \ref{claim:conditional_expectations}, Claim \ref{claim:uniform_expectations_are_equal}, and $\avminentr(Y^{*}|Z)=k$ imply
\begin{align}
 \E\cD'(Y^{*},Z) & =   \E_{z\leftarrow Z}\left[ \cD'(Y^{*}|Z=z,z) \right] 
= \E_{z\leftarrow Z}\left[ \E\cD'(U,z) \cdot 2^{n}\max_{x'}\mathbf{P}_{Y^{*}|Z=z}(x') \right]\nonumber \\
& 
= 2^{n}\lambda'\cdot  \E_{z\leftarrow Z}\left[ \max_{x'}\mathbf{P}_{Y^{*}|Z=z}(x') \right]
 =  2^{n-k}\lambda'
\end{align}
Plugging this into \eqref{eq:predicting_probability_estimate_2} we get the following bound
\begin{align}\label{eq:predicting_probability_estimate}
 \Pr\left( \textsc{Predictor}(Z,\cD',\ell)  = X \right) & \geqslant 2^{-n-1}g(\lambda'/2) 
\cdot\left( 2^{n-k}\lambda'+\epsilon \right)  \nonumber \\
& = 2^{-k}\left(1-(1-\lambda'/2)^{\ell} \right)\left(1+ \frac{2^{k-n-1}\epsilon}{\lambda'/2}\right)
\end{align}
To give a lower bound on the success probability it remains to minimize the last expression over  $\lambda' \in (0,1)$. This is answered below
\begin{Claim}\label{claim:probability_expression_analysis}
Let $h(s) =(1-(1-s)^{\ell})(1 + a s^{-1})$, where $a  >0$ and $\ell \geqslant 1+a^{-1}$. Then $h(s) \geqslant h(1) = 1+a$ for all $s\in [0,1]$.
\end{Claim}
\begin{proof}[of Claim]
The proof uses standard calculus and is given in the appendix.
\qed\end{proof}

\paragraph{Computing $t(z)$ from $\lambda$}
So far, we have shown how to construct the predicting algorithm provided that we are given the numbers $t(z)$. Now we will prove that one can compute them \emph{approximately} and use \emph{successfully} in place of the original ones. We start with a few useful facts about the auxiliary function $g$ already introduced in Claim \ref{claim:guessingprobability} in the proof of Lemma \ref{thm:unefficient_predictor}. Below we summarize its fundamental properties.
\begin{lemma}\label{thm:main_function_properties}
For $\ell > 1$ the function $g(d) = \frac{1-(1-d)^{\ell}}{d}$ on $[0,1]$ satisfies:
\begin{enumerate}[(a)]
\item $g$ is continuous at $0$ and decreasing
\item $g$ is convex
\item for any $d_2 > d_1$ we have  $ g(d_2) >  g(d_1)\left(1- \frac{\ell}{2} \cdot |d_2-d_1| \right) $
\end{enumerate}
\end{lemma}
\begin{proof}[of Lemma]
The proof uses elementary calculus and is referred to the appendix
\qed\end{proof}
The entire solution is based on the next two lemmas. The first lemma is based on the intuition that replacing $\cD$ by a distinguisher which approximates it close enough should not affect the success probability of $\textsc{Predictor}(Z,\cD,\ell)$ very much. For technical reasons we present this statement assuming \emph{one-sided $\mathcal{L}^1$-approximation}.  
The second lemma describes an efficient algorithm which obtains $\lambda$ as a \emph{ hint} on its input and computes approximations for $t(z)$ from below, for every $z$. 
\begin{lemma}\label{lemma:predicting_using_approximated_distinugisher}
Let $\cD_1,\cD_2 : \nstrings\times\mstrings \rightarrow [0,1]$ be any two functions satisfying
\begin{enumerate}[(a)]
\item $\cD_2(x,z)\geqslant \cD_1(x,z) $ for all $x,z$
\item $\E\cD_2(U,z) - \E\cD_1(U,z)  \leqslant \delta$ for all $z$
\end{enumerate}
Then we have
\begin{equation}
\Pr\left( \textsc{Predictor}(Z,\cD_2,\ell) =X \right) \geqslant (1-\ell\delta/2)\Pr\left( \textsc{Predictor}(Z,\cD_1,\ell) =X \right) 
\end{equation}
\end{lemma}
\begin{proof}[of Lemma]
We have
\begin{align}
 \Pr\left( \textsc{Predictor}(z,\cD_2,\ell) =X|Z=z \right) & = g(\E\cD_2(U,z)) \E\cD_2(X|Z=z,z) \nonumber \\ 
 & \geqslant g(\E\cD_2(U,z)) \E\cD_1(X|Z=z,z),
\end{align}
where the inequality follows from $\cD_2 \geqslant \cD_1 \geqslant 0$. The assumptions (a) and (b) imply $\left| \E\cD_1(U,z) - \E\cD_2(U,z)\right| \leqslant \delta$ for every $z$. From property (c) in Lemma \ref{thm:main_function_properties} it follows that 
\begin{align*}
 g(\E\cD_2(U,z)) \geqslant g(\E\cD_1(U,z)) (1-\ell\delta/2)
\end{align*}
for every $z$. Combining the last two estimates we get
\begin{align}
\Pr\left( \textsc{Predictor}(z,\cD_2,\ell) =X|Z=z \right)  & \geqslant (1-\ell\delta/2)\cdot g(\E\cD_1(U,z)) \E\cD_1(X|Z=z,z) \nonumber \\
& = (1-\ell\delta/2)\cdot \Pr\left( \textsc{Predictor}(z,\cD_1) =X|Z=z \right)
\end{align}
Taking the average over $z\leftarrow Z$ completes the proof.
\qed\end{proof}
\begin{lemma}\label{lemma:find_threshold}
Let $\cD :\nstrings \rightarrow [0,1]$ be any function computable in time $s$, let $\lambda \in (0,1)$ and $t \in [0,1]$ be a number such that $\E\max(\cD(U)-t,0) = \lambda$. There exists a probabilistic algorithm $\textsc{FindThreshold}(\cD,\lambda,\delta,N)$ that runs in time $\mathcal{O}\left(\log(1/\delta) N \cdot\mathrm{time}(\cD)\right)$ and with probability at least $1-2\log(12/\delta)\mathrm{e}^{-N\delta^2/3}$ outputs a number $t'$ such that $\E\max(\cD(U)-t',0)  \in [\lambda, \lambda+\delta]$. In particular, $t'\leqslant t$.
\end{lemma}
\begin{proof}[of Lemma]
The idea is pretty simple: given $t'$ we approximate values $\E\max(\cD(U)-t',0)$ by sampling and by comparing the result with $\lambda$, we can find the right value of $t'$ using binary search. This corresponds to finding a blue line on Fig. \ref{fig:2} such that the green area above is sufficiently close to $\lambda$.
\begin{function}[h]
\SetKwInOut{Input}{Input}\SetKwInOut{Output}{Output}
\SetKwFunction{BernoulliDistribution}{BernoulliDistribution}
\SetKwFunction{FindThreshold}{FindThreshold}
\DontPrintSemicolon
\Input{$\cD: \nstrings\rightarrow [0,1]$, $\lambda \in (0,1)$, parameters $\delta,N$}
\Output{$t'$ such that $ \E\max(\cD(U)-t',0) \in [\lambda,\lambda+\delta]$}  
$t^{-}\gets -1, t^{+}\gets 1$ \;
\Repeat{ $t^{+} - t^{-} \leqslant \frac{\delta}{12}$} {
$t' \gets (t^{-}+t^{+})/2$ \;
$x_1,\ldots,x_{N} \leftarrow U$ \tcc*{fresh values every time}
$\lambda' \gets N^{-1}\sum_{j=1}^{N} \max\left(\cD(x_j) - t',0 \right)$ \tcc*{ $\lambda' \approx \E\max(\cD(U)-t_i,0)$}
\uIf{$\lambda' > \lambda+\frac{2\delta}{3}$}{
$t^{-}\gets t'$ \;
}
\uElseIf{$\lambda' < \lambda + \frac{\delta}{3}$}{
$t^{+} \gets t'$ \;
}
\Else{
 \Return{$t'$} \;
}
}
\uIf{$t'<-1+ \frac{\delta}{12}$}{
$t'\gets -1$\;
}
\Return{$t'$}\;
\caption{ FindThreshold($\cD,\lambda,\delta,N$) }\label{alg:find_threshold} 
\end{function}
 
\noindent
The function $h(t')= \E\max(\cD(U)-t',0)$ is clearly non-increasing with respect to $t'$ and changes from $1+\E\cD(U)$ at $t'=-1$ to $0$ for $t=1$. Moreover, it is strictly decreasing in a small neighborhood of $t'=t$ and for all $t' < t$. Indeed, since $\lambda > 0$ there is at least one $x$ such that $\cD(x) > t$. Taking $t' < t''\leqslant\min_{x: \cD(x) > t} \cD(x)$ we see that $h(t')-h(t'')\geqslant 2^{-n}(t''-t') >0$. Hence, $t' >t$ implies $\E\max(\cD(U)-t',0) < \E\max(\cD(U)-t,0)=\lambda$. This proves the second part of the statement. 
Denote by $\lambda'_i, t'_i, t^{-}_i,t^{+}_i$ the values assigned in round $i$ to $\lambda',t',t^{-},t^{+}$ respectively. Observe that by the Chernoff Bound\footnote{We use the following version: let $X_1,\ldots,X_N$ be $[0,1]$-valued independent random variables, let $X = \sum_{i=1}^{N}X_i$ and $\mu = \E X$. Then $\Pr\left( \left| X-\mu\right| > \delta\mu \right) < 2\exp(-\mu \delta^2/3)$ } and the union bound over at most $\log(12/\delta)$ rounds of the execution, with probability $p=1-2\log(12/\delta)\exp(-N\delta^2/3)$ we have $\left| \lambda'_i -h(t_i)\right| < \frac{\delta}{12}$ for every round $i$. Note that with the same probability the algorithm satisfies the invariant property: if there is $t_0 \in [t^{-}_i,t^{+}_i]$ such that $h(t_0) \in \left[\lambda+\frac{5\delta}{12},\lambda+\frac{7\delta}{12}\right]$ and the algorithm jumps to round $i+1$ then $t_0 \in \left[t^{-}_{i+1},t^{+}_{i+1}\right]$. Suppose that $h(t_0) \in \left[\lambda+\frac{5\delta}{12},\lambda+\frac{7\delta}{12}\right]$ for some $t_0 \in [-1,1]$. Now we have two possibilities:
either we terminate with $t_i$ such that $\lambda_i \in \left[\lambda+\frac{\delta}{3},  \lambda + \frac{2\delta}{3}\right]$ which means $h(t_i) \in \left[\lambda+\frac{3\delta}{12},\lambda+\frac{7\delta}{12}\right] $ and we are done, or we will eventually find such $t_0$ up to an error $\frac{\delta}{12}$. Since $|h(t_2)-h(t_1)| \leqslant |t_2-t_1|$ for any $t_1,t_2$, the returned number $t'$ satisfies $h(t_0) - \frac{\delta}{12} \leqslant h(t') \leqslant  h(t_0) + \frac{\delta}{12}$, in particular it satisfies the desired inequality. It remains to consider the case when either $h(t) < \lambda+\frac{5\delta}{12}$ for all $t$ or $h(t) > \lambda + \frac{7\delta}{12}$. Since $h(1)=0$ the second is clearly impossible. In the first case we have $h(t) \leqslant h(-1) < \lambda+\frac{5\delta}{12}$, which means that in every round $i$ we have $t^{-}_i=-1$ and either we terminate with $t_i$ such that $\lambda'_i \in \left[\lambda+\frac{\delta}{3},  \lambda + \frac{2\delta}{3}\right]$ which means $h(t_i) \in \left[\lambda+\frac{3\delta}{12},\lambda+\frac{7\delta}{12}\right] $ and we are done, or 
in every round $i$ we do the assignment $t^{+}_{i+1}=t_i$ which yields $t_i =-1+ 2^{-i+1}$ and the main loop halts with $t_i < -1+\frac{\delta}{12}$. The algorithm outputs then $-1$ which satisfies the desired inequality, because of the assumption $h(-1) < \lambda+\frac{5\delta}{12}$ and the trivial inequality $h(-1)\geqslant 1\geqslant \lambda$.
\qed\end{proof}
\noindent Let $\cD'$ be as in Lemma \ref{thm:unefficient_predictor}. Let $t'(z) = \textsc{FindThreshold}(\cD,\lambda,\delta,N)$, define $D''(x,z) = \max(\cD(U,z)-t'(z),0)$. Denote by $\Pr[bad]$ the probability that $\E D''(U,z) \not\in [\lambda,\lambda+\delta]$  (i.e. probability of failure of the algorithm \ref{alg:find_threshold}). If the event $bad$ doesn't occur then $\cD'' \geqslant \cD'$ and $\E\cD''(U,z) \leqslant \E\cD'(U,z)+\delta$. Applying the last two claims we obtain
\begin{align}
\Pr\left[ \textsc{Predictor}(z,\cD'',\ell) \right] \geqslant 2^{-k}\left(1+2^{k-n}\epsilon\right)\cdot \left(1-\frac{\ell \delta}{2}\right)\Pr[\neg bad]
\end{align}
By the elementary inequality $(1+s)(1-s/4)^2\geqslant 1$ valid for $s\in [0,1]$, for this probability to be bigger than $2^{-k}$ it is enough to require 
\begin{align}
 \ell \delta / 2 & \leqslant  2^{k-n}\epsilon / 4 \\
2\log(12/\delta)\exp(-N\delta^2)/3) & \leqslant  2^{k-n}\epsilon / 4
\end{align}
The solution for the first inequality is $\delta = \mathcal{O}(2^{2(k-n)}\epsilon^2)$ which implies $\delta \ll \epsilon$. The second one gives us $N =\Omega\left( (1/\delta)^2( \log \log(1/\delta) + n-k + \log(1/\epsilon) \right)$ which can be simplified to $N =\Omega\left( (1/\delta)^2( \log(1/\delta) \right)$. The total running time is (up to a constant factor) the time needed for invoking $\mathcal{O}\left(\ell\cdot N\log(1/\delta)\right)= \mathcal{O}\left( (2^{\Delta}/\epsilon)^5\log^2\left( 2^{\Delta}/\epsilon \right)\right)$ times of the distinguisher $\cD$ .

\section{Proof of Lemma \ref{thm:characterizing_maximizer}}
\begin{proof}
Consider the following linear optimization program
\begin{equation}\label{eq:maximizing_expectation_linearized}
 \begin{aligned}
 & \underset{P_{x,z},a_z}{\text{maximize}}  &  \sum\limits_{x,z} \cD(x,z)P(x,z) \\
 & \text{subject to} & -P_{x,z} \leqslant 0  &,\  (x,z) \in \nstrings\times\mstrings \\
 & &  \sum\limits_{x}  P_{x,z} - \mathbf{P}_Z(z) = 0 &,\  z\in\mstrings \\
 & & P_{x,z} -a_z \leqslant 0 &,\ z \in\mstrings \\
 && \sum\limits_{z} a_z - 2^{-k}  \leqslant 0 &
 \end{aligned}
\end{equation}
This problem is equivalent to (\ref{eq:maximizing_expectation}) if we define $\mathbf{P}_{Y,Z}(x,z)=P(x,z)$ and replace the condition $\sum_{z}\max_{x}\mathbf{P}_{Y,Z}(x,z)\leqslant 2^{-k}$, which is equivalent to $\avminentr(Y|Z)\geqslant k$, by the existence of numbers $a_z \geqslant \max_{x}\mathbf{P}_{Y,Z}(x,z)$ such that $\sum_{z}a_z\leqslant 2^{-k}$. The solutions of (\ref{eq:maximizing_expectation_linearized}) can be characterized as follows:
\begin{Claim}
The numbers $(P_{x,z})_{x,z},(a_z)_z$ are optimal for (\ref{eq:maximizing_expectation_linearized})
if and only if there exist numbers $\lambda^{1}(x,z)\geqslant 0$, $\lambda^{2}(z) \in \mathbb{R}$, $\lambda^{3}(x,z) \geqslant 0$, $\lambda^4 \geqslant 0$ such that
\begin{enumerate}[(a)]
\item ${\cD}(x,z) =  -\lambda^1(x,z) + \lambda^2(z) + \lambda^3(x,z) $ and $0 = -\sum_{x} \lambda^3(x,z) + \lambda^4$
\item We have $\lambda^1(x,z) = 0$ if $P_{x,z} > 0$, $\lambda^3(x,z) = 0$ if $P_{x,z} < a_z$, $\lambda^{4} = 0$ if $\sum_z a_z < 2^{-k}$.
\end{enumerate}
\end{Claim}
\begin{proof}[of Claim]
This is a straightforward application of KKT conditions.
\qed\end{proof}
\noindent It remains to apply and simplify the last characterization. Let $(P^{*}_{x,z})_{x,z},(a^{*}_z)_z$ be optimal for \eqref{eq:maximizing_expectation_linearized}, where $P^{*}(x,z)=\mathbf{P}_{Y^{*},Z}(x,z)$, and $\lambda^{1}(x,z),\lambda^{2}(z),\lambda^{3}(x,z),\lambda^{4}(x)$ be corresponding multipliers given by the last claim. Define $t(z) = \lambda^{2}(z)$ and $\lambda = \lambda^{4}$. Observe that for every $z$ we have $a^{*}_z \geqslant \max\limits_{x}\mathbf{P}(x,z) \geqslant 2^{-n}\mathbf{P}_Z(z) > 0$ and thus for every $(x,z)$ we have 
\begin{equation}\label{eq:55}
\lambda^{1}(x,z)\cdot \lambda^{3}(x,z)= 0
\end{equation}
If $P^{*}(x,z)=0$ then $P^{*}(x,z) < a^{*}(z)$ and $\lambda^{3}(x,z) = 0$, hence $\cD(x,z) \leqslant t(z) $ which proves (c). If $P^{*}(x,z) = \max_{x'}P^{*}(x,z)$ then $P^{*}(x,z) < 0$ and $\lambda^{1}(x,z) = 0$ which proves (d). 
Finally observe that \eqref{eq:55} implies
\begin{align*}
 \max(\cD(x,z)-t(z),0)& = \max(-\lambda^{1}(x,z)+\lambda^{3}(x,z),0)  = \lambda^{3}(x,z)
\end{align*}
Hence, the assumption $\sum_{x}\lambda^{3}(x,z) = \lambda^{4}=\lambda$ proves (a).

Suppose now that the characterization given in the Lemma is satisfied. Define $P^{*}(x,z) = \mathbf{P}_{Y,Z}(x,z)$ and $a_z = \max_{z}\mathbf{P}_{Y^{*},Z}(x,z)$, let $\lambda^{3}(x,z) = \max(\cD(x,z) - t(z),0)$, $\lambda^{1}(x,z)=\max(t(z)-\cD(x,z),0)$ and $\lambda^{4}=\lambda$. 
We will show that these numbers satisfy the conditions described in the last claim. By definition we have $-\lambda^{1}(x,z) + \lambda^{2}(z) + \lambda^{3}(x,z) = \cD(x,z)$, by the assumptions we get $\sum_{x} \lambda^{3}(x,z) = \lambda = \lambda^{4}$. This proves
part (a).\ Now we verify the conditions in (b). Note that $\cD(x,z) < t(z)$ is possible only if $\mathbf{P}_{Y^{*}|Z=z}(x)=0$ and $\cD(x,z) > t(z)$ is possible only if $\mathbf{P}_{Y^{*}|Z=z}(x) = \max_{x'}\mathbf{P}_{Y^{*}|Z=z}(x')$. Therefore, if $\mathbf{P}_{Y,Z}(x,z) > 0$ then we must have $\cD(x,z) \geqslant t(z)$ which means that $\lambda^{1}(x,z)=0$. Similarly if $\mathbf{P}_{Y,Z}(x,z) < \max_{z}\mathbf{P}_{Y^{*},Z}(x,z)$ then $\cD(x,z) \leqslant t(z)$ and $\lambda^{3}(x,z) =0$. Finally, since we assume $\avminentr(Y^{*}|Z) = k$ we have $\sum_{z} a_z = 2^{-k}$ and thus there is no additional restrictions on $\lambda^{4}$.
\qed\end{proof}

\section{Proof of Corollary\ref{corollary:entropy_range}}

\begin{proof}[of Corollary]
Let $y_{\max}(z) = \max_{x'}\mathbf{P}_{Y|Z=z}(x')$. Consider the function
\begin{align}
f^{\delta}_z(x) = \left\{
\begin{array}{rl}
 y_{\max}(z) + \delta, & \cD'(x,z) > t(z) \\
 \frac{1-\#\left\{x:\ \cD'(x.z)>t(z) \right\}\cdot (y_{\max}+\delta)}{\#\left\{x:\ \cD'(x,z)=t(z) \right\}}, & \cD'(x,z) = t(z) \\
 0, & \cD'(x,z) < t(z)
\end{array}
\right.
\end{align}
This function defines a distribution that satisfies 
\begin{align}
f^{\delta}_{z}(x) \leqslant \max_{x'}f^{\delta}_{z}(x') \quad \forall x: \cD'(x,z)\leqslant t(z)
\end{align}
if and only if $\delta$ satisfies
\begin{align}\label{eq:mass_shifting_condition}
 \frac{1}{ \#\left\{x:\ \cD'(x.z)\geqslant t(z) \right\} } \leqslant y_{\max}(z) + \delta \leqslant \frac{1}{ \#\left\{x:\ \cD'(x.z)>t(z) \right\} }
\end{align}
In particular these conditions are satisfied for $\delta = 0$. Suppose now that there are $z_i$ and $x_i$ for $i=1,2$ such that $
  0< \mathbf{P}_{Y^{*}|Z=z_i}(x_i) < \max\limits_{x'}\mathbf{P}_{Y^{*}|Z=z}(x')$. Define $\delta$ by
\begin{align*}
 \delta  = \min\left(  y_{\max}(z_1)-\frac{1}{ \#\left\{ x: \cD'(x,z_1)\geqslant t(z_1) \right\} },\ \frac{1}{ \#\left\{ x: \cD'(x,z_2)> t(z_2) \right\} } - y_{\max}(z_2) \right)
\end{align*}
By Lemma \ref{thm:characterizing_maximizer} we immediately obtain that $\delta \geqslant 0$. It follows easily from the definition of $\delta$ that the number $-\delta$ satisfies \eqref{eq:mass_shifting_condition} with $z=z_1$ and that $\delta$ satisfies \eqref{eq:mass_shifting_condition} for $z=z_2$. We can see now that if we replace the distribution $Y^{*}|Z=z_1$ by $f^{-\delta}_{z_1}$ and the distribution $Y^{*}|Z=z_2$ by $f^{\delta}_{z_2}$ then we obtain the distribution $Y'|Z$ satisfying conditions in Lemma \ref{thm:characterizing_maximizer} and $\avminentr(Y'|Z)= k$. Finally, observe that $\delta = \frac{1}{ \#\left\{ x: \cD'(x,z_2)> t(z_2) \right\} } - y_{\max}(z_2) $ means that the distribution $Y'|Z=z_2$ is uniform on $\left\{ x: \cD'(x.z_2)>t(z_2) \right\}$. In turn, if $\delta=y_{\max}(z_1)-\frac{1}{ \#\left\{ x: \cD'(x,z_1)\geqslant t(z_1) \right\} }$ then the distribution $Y'|Z=z_1$ is uniform on $\left\{x: \cD'(x,z_1)\geqslant t(z_1)\right\}$.
\qed\end{proof}

\section{Proof of Claim \ref{claim:probability_expression_analysis}, Lemma \ref{thm:unefficient_predictor}}
\begin{proof}
We check that $\lim_{s\to 0}h(s) = a\ell$ and thus the function $h$ is continuous on the interval $[0,1]$. This means that $h$ attains its minimum  at some point $s=s_0$.
There is nothing to prove if $s_0 \in \{0,1\}$. Suppose that $s_0 \in (0,1)$. Then we must have $\left.\frac{\partial h}{\partial s}\right|_{s=s_0} = 0$. The first derivative of the function $h$ is given by the following formula
\begin{align}
\frac{\partial h}{\partial s} &= \frac{s \ell (a+s) (1-s)^{\ell-1}+a \left((1-s)^\ell-1\right)}{s^2}\nonumber \\
& = \frac{-a+(1-s)^{\ell-1}\left(a(1-s)+(a+s)\ell s \right)}{s^2}
\end{align}
Therefore for $s=s_0$ we obtain $(1-s_0)^{\ell-1}  = \frac{a}{a(1-s_0)+(a+s_0)\ell s_0 }$ and hence
\begin{align}
 h(s_0) &= (1-(1-s_0)\cdot (1-s_0)^{\ell-1})\left(1+a s_0^{-1}\right) \nonumber \\
 & = \frac{(a+s_0)^2\ell }{a(1-s_0)+(a+s_0)\ell s_0}
\end{align}
Note that the last expression is increasing with respect to $\ell$ and that from the assumption we have $\ell > \frac{1+a}{a+s_0}$. Using this we obtain
\begin{align}
 h(s_0) &\geqslant \frac{(a+s_0)(1+a)}{a(1-s_0)+(1+a)s_0}=1+a
\end{align}
which completes the proof.
\qed\end{proof}
\noindent The lemma follows now immediately by combining \eqref{eq:predicting_probability_estimate} and the last claim.
\qed\end{proof}

\section{Proof of Lemma \ref{thm:main_function_properties}}

\begin{proof}[of Lemma]
It is easy to see that $\lim_{d\rightarrow 0^{+}} g(d) = \ell$. We have
\begin{align}\label{eq:main_function_derivative1}
 \frac{\partial g(d)}{\partial d} = \frac{(1-d)^{\ell-1}(d(\ell-1)+1)-1}{d^2}
\end{align}
Using the inequality $1-d \leqslant \mathrm{e}^{-d}$ we obtain
\begin{align*}
\frac{\partial g(d)}{\partial d} & \leqslant \frac{\mathrm{e}^{-d(\ell-1)}\left(d(\ell-1)+1 \right)-1}{d^2} \leqslant 0
\end{align*}
Where the second inequality follows from the inequality $\mathrm{e}^{s} \geqslant 1+s$ applied for $s=d(\ell-1)$. This proves (a). The second derivative is given by 
\begin{align}\label{eq:main_function_derivative2}
 \frac{\partial^2 g(d)}{\partial d^2} & = -\frac{(1-d)^{\ell-2}\left(2 + 2d(\ell-2) + d^2((\ell-2)^2+\ell-2) \right)-2}{d^3} 
\end{align}
Using $1-d\leqslant \mathrm{e}^{-d}$ and applying the inequality $\mathrm{e}^{s} \geqslant 1+s+\frac{1}{2}s^2$, which holds for $s\geqslant 0$, for $s = d(\ell-1)$ we obtain
\begin{align}\label{eq:main_function_derivative2_estimate}
 \frac{\partial^2 g(d)}{\partial d^2} & = -\frac{(1-d)^{\ell-2}\left(2 + 2d(\ell-2) + d^2((\ell-2)^2+\ell-2) \right)-2}{d^3} \nonumber \\
 & \geqslant  -\frac{(1-d)^{\ell-1}\left(2 + 2d(\ell-1) + d^2(\ell-1)^2 \right)-2}{d^3} \nonumber \\ 
 & \geqslant  -\frac{\mathrm{e}^{-d(\ell-1)}\left(2 + 2d(\ell-1) + d^2(\ell-1)^2 \right)-2}{d^3} \nonumber \\
 & \geqslant -\frac{2-2}{d^3} = 0,
\end{align}
which proves (b). Finally, note that by convexity we have
\begin{align}
 g(d_2)-g(d_1) \geqslant  (d_2-d_1) \cdot \left. \frac{\partial g(d)}{\partial d} \right|_{d=d_1}. 
\end{align}
Since $g(d) > 0$ and $\frac{\partial \ln g(d)}{\partial d} = \frac{\partial g(d)}{\partial d}/g(d)$ we can rewrite this as
\begin{align}
 \frac{g(d_2)-g(d_1)}{g(d_1)} \geqslant  (d_2-d_1) \cdot \left. \frac{\partial \ln g(d)}{\partial d}  \right|_{d=d_1}. 
\end{align}
Note that the function $d\rightarrow \ln g(d)$ is convex, as the composition of the convex function $g(\cdot)$ and the convex increasing function $\ln(\cdot)$. Therefore,
\begin{align}
 \frac{\partial \ln g(d)}{\partial d} \geqslant  \left.\frac{\partial \ln g(d)}{\partial d} \right|_{d=0} =-\frac{\ell-1}{2}
\end{align}
Combining the last two inequalities yields
\begin{align}
 \frac{g(d_2)-g(d_1)}{g(d_1)} > -\frac{\ell}{2}\cdot (d_2-d_1),\quad d_2-d_1 >0.
\end{align}
which completes the proof of (c).
\qed\end{proof}

\end{document}